\newtheorem{theorem}{Theorem}
\newtheorem{lemma}{Lemma}
\newtheorem{proposition}{Proposition}
\newtheorem{corollary}{Corollary}
\newtheorem{remark}{Remark}
\newcommand{\Dalpha}{D_{\alpha}}
\newcommand{\bp}{\mathbf{p}}
\newcommand{\bW}{\mathbf{W}}
\newcommand{\bw}{\mathbf{w}}
\newcommand{\bA}{\mathbf{A}}
\newcommand{\bv}{\mathbf{v}}
\newcommand{\bB}{\mathbf{B}}
\newcommand{\ba}{\mathbf{a}}
\newcommand{\blambda}{\boldsymbol{\lambda}}
\newcommand{\bV}{\mathbf{V}}
\newcommand{\bU}{\mathbf{U}}
\newcommand{\bSigma}{\boldsymbol{\Sigma}}
\DeclareMathOperator{\Tr}{Tr}
\author{Jiachun Liao, Lalitha Sankar,  Vincent Y. F. Tan and Flavio du Pin Calmon
\thanks{This work is supported in part by the National Science Foundation under grants CCF\--1350914 and CIF\--1422358.}
}
\begin{document}

\title{Hypothesis Testing under Mutual Information Privacy Constraints in the High Privacy Regime}

\maketitle



\begin{abstract}
Hypothesis testing  is a statistical inference framework for determining the true distribution among a set of possible distributions for a given dataset. Privacy restrictions may require the curator of the data or the respondents themselves to share data with the test only after applying a randomizing \textit{privacy mechanism}. This work considers mutual information (MI) as the privacy metric for measuring leakage. In addition, motivated by the Chernoff-Stein lemma, the relative entropy between pairs of distributions of the output (generated by the privacy mechanism) is chosen as the utility metric. For these metrics, the goal is to find the optimal privacy-utility trade-off (PUT) and the corresponding optimal privacy mechanism for both binary and $m$-ary hypothesis testing. Focusing on the high privacy regime, Euclidean information-theoretic approximations of the binary and $m$-ary PUT problems are developed. The solutions for the approximation problems clarify that an MI-based privacy metric preserves the privacy of the source symbols in inverse proportion to their likelihoods.
\end{abstract}

\begin{IEEEkeywords}
Hypothesis testing, privacy-guaranteed data publishing, privacy mechanism, Euclidean information theory, relative entropy, R\'enyi divergence, mutual information.
\end{IEEEkeywords}
\vspace{-0.15cm}
\section{Introduction}
There is tremendous value to publishing datasets for a variety of statistical inference applications; however, it is crucial to ensure that the published dataset, while providing utility, does not reveal potentially privacy-threatening information. Specifically, the published dataset should allow the intended inference to be made while limiting other unwanted inferences. This requires using a randomizing mechanism (i.e., a \textit{noisy channel}) that guarantees a certain measure of privacy. Any such \textit{privacy mechanism} may, in turn, reduce the fidelity of the intended inference leading to a trade-off between utility of the published data and the privacy of the respondents in the dataset. 

We consider the problem of privacy-guaranteed data publishing hypothesis testing. The use of large datasets to test two or more hypotheses (e.g., the 99\%-1\% theory of income distribution in the United States~\cite{economist_article}) relies on the classical statistical inference framework of binary or multiple hypothesis testing. The optimal test for hypothesis testing under various scenarios (non-Bayesian,  Bayesian, minimax) involves the so-called Neyman-Pearson (or likelihood ratio) test~\cite{PoorBook} in which the likelihood ratio of the hypotheses is compared to a given threshold. We focus exclusively on the non-Bayesian setting. In particular, for $m$-ary ($m\geq 2$) hypothesis testing problem, we consider the setting in which the probability of missed detection is minimized for one specific hypothesis (e.g., presence of cancer) while requiring the probabilities of false alarm for the same hypothesis to be bounded (relative to the remaining hypotheses). In this context, we can apply the Chernoff-Stein Lemma~\cite[Chapter~11]{IT_Cover} which states that for a pair of hypotheses the largest error exponent of the missed detection probability, under the constraint that the false alarm probability is bounded above by a constant, is the relative entropy between the probability distributions for the two hypotheses.

Inspired by the Chernoff-Stein lemma, for the $m$-ary hypothesis setting described above, we use relative entropy as a measure of the utility of the published dataset (for hypothesis testing), and henceforth, refer to this as the \textit{relative entropy setting}. Furthermore, for binary hypothesis testing ($m=2$), we also consider the setting in which the probabilities of both missed detection and false alarm decrease exponentially. For this setting, using known results of hypothesis testing \cite{EEHT_Tuncel}, we take the R\'enyi divergence as the utility metric and refer to this as the \textit{R\'enyi divergence setting}. For the privacy metric, we use mutual information between the original and published datasets as a measure of the additional knowledge (privacy leakage) gained on average from the published dataset. By bounding the MI leakages, our goal is to develop privacy mechanisms that restrict the relative entropy between the prior and posterior (after publishing) distributions of the dataset, averaged over the published dataset. By restricting the distance between prior and posterior beliefs, we capture a large class of computationally unbounded adversaries that can use different inference methods. Specifically, bounding MI leakage allows us to exploit information-theoretic relationships between MI and the probabilities of detection/estimation to bound the ability of an adversary to ``learn'' the original dataset \cite{IT_Cover,du_pin_calmon_privacy_2012,RebolloMonedero_TClosenessLikePrivacy_2010}.

\subsection{Our Contributions}
We study the privacy-preserving data publishing problem by considering a \textit{local privacy} model in which the {\em same} (memoryless) mechanism is applied independently to each entry of the dataset. This allows the respondents of a dataset to apply the privacy mechanism {\em before} sharing data. Our main contributions are as follows:

\begin{enumerate}
\item  We introduce the privacy-utility trade-off (PUT) problem for hypothesis testing (Section \ref{section:Problem_Formulation}). The resulting PUT involves maximizing the minimum of a set of relative entropies, subject to constraints on the MI-~based leakages for all source classes. 

\item The PUT problem involves maximizing the minimum of a set of convex functions over a convex set which is, in general, NP-hard. In Section \ref{section:approximation}, we approximate the trade-off in the high privacy regime (near zero leakage) using techniques from Euclidean information theory (E-IT); these techniques have found use in deriving capacity results in~\cite{EITzheng2008,EIT2015}.

\item For binary hypothesis testing, we first consider the relative entropy setting (Section \ref{subsection:BHT_relative}), in which we determine the optimal mechanism in closed form for the E-IT approximation by exploring the problem structure. Our results suggest that the solution to the E-IT approximation is independent of the alphabet size and, more importantly, that a MI-based privacy metric preserves the privacy of the source symbols inversely proportional to their likelihoods, thereby, providing more distortion to the (informative) outliers in the dataset which, in general, are more vulnerable to detection.
 
 \item We extend our analysis to the R\'enyi divergence setting (Section \ref{subsection:BHT_renyi}), the optimal mechanism for its E-IT approximation problem in high privacy regime is similar in form to the relative entropy setting.
 
 \item We study the $m$-ary hypothesis testing problem (Section \ref{section:MHT}) and show that optimal solutions of the E-IT approximation can be obtained via semidefinite programs (SDPs) \cite{boydconvex}. Specially, for binary sources the optimal mechanism is derived in closed form. The dependence on the source distribution is highlighted here as well.

 \item In Section \ref{section:illustration}, via numerical simulations, we uncover regimes of distribution  tuples and leakage values for which the E-IT approximation is accurate.
 \end{enumerate}

\subsection{Related Work}
Privacy-guaranteed hypothesis testing in the high privacy regime using MI as the privacy metric was first studied by the authors in \cite{Liao_Allerton16}. Specifically, the focus of \cite{Liao_Allerton16} is on the relative entropy setting of binary hypothesis test in the high privacy regime.
We significantly extend that work with three key contributions: (a) we derive optimal mechanisms in the high privacy regime for binary hypothesis test in the R\'enyi divergence setting; (b) we derive optimal mechanisms in the high privacy regime for the $m$-ary problem in relative entropy setting; (c) we provide detailed illustrations of results for binary  and $m$-ary hypothesis testing. 

Recently, the problem of designing privacy mechanisms for hypothesis testing has gained interest. Kairouz et al.~\cite{Kairouz2014} show that the optimal locally differential privacy (L-DP) mechanism has a \textit{staircase} form and can be obtained as a solution of a linear program. Gaboardi et al.~\cite{DP_ChiSquared_Gaboardi} deal with a privacy-guaranteed hypothesis testing by using chi-square goodness of fit as the utility measure and adding Gaussian or Laplace noise to dataset to guarantee DP-based privacy protection.

Our problem differs from these efforts in using MI  as the privacy metrics. In \cite{Kairouz2014}, the L-DP formulation, focused on the high privacy regime, requires the mechanism to limit distinction between any two letters of the source alphabet for a given output. The requirement also gathers all privacy mechanisms satisfying a desired privacy protection measured by L-DP within a hypercube. Therefore, the authors simplify the trade-off problem to a linear program by exploring the sub-linearity of the relative entropy function. In contrast, all privacy mechanisms giving a desired MI-based privacy form a  convex set which is not a polytope. However, taking advantage of E-IT, we propose good approximations for the MI-based privacy utility trade-offs in high privacy regime. In fact, we present closed-form privacy mechanisms for both binary hypothesis testing with arbitrary alphabets as well as $m$-ary hypothesis testing with binary alphabets. Furthermore, for $m$-ary hypothesis testing with arbitrary sources, the privacy mechanism can be attained effectively by solving an SDP.

The connection between hypothesis testing and privacy has been studied in the context of location anonymization and smart meter privacy. In location privacy, the problem of determining if a sequence of anonymized data points (e.g. location positions without an accompanying user ID) belongs to a target user can be formulated as a hypothesis test. More specifically, if the distribution of the user's data is known and unique among other users, any observed sequence can be tested against the hypothesis that it was drawn from this distribution, thus revealing if it belongs to the target user. Within this context, Montazeri \text{et al.}  \cite{montazeri_defining_2016,montazeri_achieving_2016}  studied the problem of anonymizing sequences of location data, and characterized the probability of correctly guessing a target user's data within a larger dataset. In related work on smart meter privacy, Li and Oechtering \cite{li_privacy_2015}  considered the problem of private information leakage in a smart grid. Here, an adversary challenges a consumer's privacy by performing an unauthorized binary hypothesis test on the consumer's behavior based on smart meter readings. Li and Oechtering \cite{li_privacy_2015} propose a solution for mitigating the incurred privacy risk with the assist of an alternative energy source. 

The theoretical analysis done by Montazeri \text{et  al.}  \cite{montazeri_defining_2016,montazeri_achieving_2016}   and Li and Oechtering  \cite{li_privacy_2015}  are related to the one presented here in that they also make use of large deviation (information-theoretic) results in hypothesis testing. However, we apply these powerful theoretical tools to a different setting, in which data is purposefully randomized before disclosure in order to provide privacy, while guaranteeing utility in terms of a successful hypothesis test. Whereas they consider a hypothesis testing adversary, here we consider a precise hypothesis test as part of the utility metric.  

MI has been amply used as a measure for quantifying information leakage within the information-theoretic privacy literature (cf.~\cite{sankar_utility-privacy_2013,sankar_smart_2013,du_pin_calmon_privacy_2012,salamatian_how_2013,calmon_fundamental_2015,rebollo-monedero_t-closeness-like_2010,sankar_information-theoretic_2010} and the references therein). The connection between MI-based metrics and other privacy metrics has been studied, for example, by Makhdoumi and Fawaz \cite{makhdoumi_privacy-utility_2013}. In the present paper, we approximate MI by the chi-squared divergence which, in turn, also posses interesting estimation-theoretic properties \cite{calmon_bounds_2013}. An exploration of the role of   chi-squared related metrics in privacy has appeared in the work of Asoodeh et al. \cite{asoodeh_maximal_2015,asoodeh_privacy-aware_2016}.

\subsection{Notation}
We use bold capital letters to represent matrices, e.g., $\mathbf{X}$ is a matrix with the $i^{\mathrm{th}}$ row (or column)  being $\mathbf{X}_i$ and the $(i,j)^{\mathrm{th}}$ entry $X_{ij}$. We  use bold lower case letters to represent vectors, e.g. $\mathbf{x}$ is a vector with the $i^{\mathrm{th}}$ entry $x_i$. Sets are denoted by capital calligraphic letters. 

For vectors $\mathbf{a}$ and $\mathbf{b}$, and  functions $f$ and $g$,  $\big[\frac{f(\mathbf{a})}{g(\mathbf{b})}\big]$ is a diagonal matrix with the $i^{\mathrm{th}}$ diagonal entry being $\frac{f(a_i)}{g(b_i)}$, e.g., the diagonal matrix $[\frac{\mathbf{a}^2}{\sqrt{\mathbf{b}}}]$ has diagonal entries $\frac{a_i^2}{\sqrt{b_i}}$. We denote the $l_2$-norm of a vector $\mathbf{x}$ by $\|\mathbf{x}\|$, the logarithm of $x$ to the base $2$ as $\log x$.  Probability mass functions are denoted as row vectors, e.g., $\mathbf{p}$. In addition, $D(\cdot \|\cdot)$ denotes the relative entropy and $I(\cdot ;\cdot)$ denotes the MI.  We can write the MI between two random variables or between a probability distribution and the corresponding conditional probability matrix. Indeed, for two random variables $X,\bar{X}$ with $X\sim \mathbf{p}$ and $\bar{X}|\{X=x\}\sim \mathbf{W}_{\bar{X}|X=x}$, the MI is denoted as $I(X;\bar{X})$ or $I(\mathbf{p},\mathbf{W}_{\bar{X}|X})$.

\section{Problem Formulation}\label{section:Problem_Formulation}
\subsection{General Hypothesis Testing}
We consider an $m$-ary hypothesis testing problem that distinguishes between $m\ge 2$ explanations for an observed dataset.
Let $X^n=(X_1,\ldots,X_n)$ denote a sequence of $n$   random variables, where the entries $X_i$ are drawn independently according to a probability distribution $\mathbf{p}$. The observed random variables are assumed to be discrete with alphabet $\mathcal{X}$ and size $|\mathcal{X}|=M$. The $m$ hypotheses are denoted as $H_k:\, \bp=\bp_k$ for $k\in\{1,\ldots,m\}$. Our utility goal is to make a decision about the underlying distribution of the data $X^n$. Let the disjoint decision regions be $\mathcal{A}_k^{(n)}$. This means that if $X^n $ belongs to $\mathcal{A}_k^{(n)}$, we decide in favor of $H_k$.  

\subsection{Binary Hypothesis Testing}
In binary hypothesis testing, there are only two hypotheses: $H_1:\, \mathbf{p}=\mathbf{p}_1$ and $H_2:\, \mathbf{p}=\mathbf{p}_2$. The optimum test is the Neyman-Pearson test in which the decision region for hypothesis $H_1$ is  $\mathcal{A}_1^{(n)} =\big\{ \mathbf{x}^n \,:\,  \frac{\mathbf{p}_1(\mathbf{x}^n)}{\mathbf{p}_2(\mathbf{x}^n)}>T\big\}$ for some threshold $T\in\mathbb{R}$. Let $\beta_1^{(n)}$ and $\beta_2^{(n)}$ be the probabilities of false alarm and missed detection for $H_1$, respectively. Use $\beta_2^{(n)}(\delta)$ to indicate the smallest probability of the missed detection subject to the condition that $\beta_1^{(n)}\le\delta$. The Chernoff-Stein lemma~\cite[Chap.~11]{IT_Cover} states that
\begin{equation}
 \lim_{n\to\infty}-\frac{1}{n}\log\beta_2^{(n)}(\delta)=D(\bp_2\|\bp_1),\quad\forall\, \delta\in (0,1).  \label{eqn:stein}
 \end{equation} 
Hence, we use $D(\bp_2\|\bp_1)$ as our utility function.

\subsection{$m$-ary Hypothesis Testing}
In $m$-ary hypothesis testing, there are 
$m(m-1)$ different errors resulting from mistaking hypothesis $H_i$ for $H_j$, $i\ne j$. To keep our analysis simple, we consider a scenario somewhat analogous to the ``red alert'' \cite{RedAlert_Nazer} problem in ``unequal error protection'' \cite{UnequalErrorProtection_Borade}. There is one distinguished hypothesis $H_1$ whose inference takes precedence. For example, in practice $\bp_1$ could be the underlying distribution of measurements of a  malignant tumor; the other distributions $\bp_2,\ldots, \bp_k$ could    be the underlying distributions of  measurements of  various benign tumors. We would like to minimize the miss-detection rate of $H_1$. In this scenario, we would design the decision regions  $\{\mathcal{A}_k^{(n)}\}_{k=1}^m$ to maximize the minimum of $E_{1,k}$ over $k \in \{2,\ldots, m\}$, where $E_{ 1,k}$ is the error exponent  (exponential rate of decay of the error probability analogous to~\eqref{eqn:stein}) of mistaking $H_k$ when $H_1$ is true. 

 
\subsection{Privacy Considerations}
In most data collection and classification applications, there may be an additional requirement to ensure that the dataset, while providing utility, does not leak information about the respondents of the data. This in turn implies that the data provided to the hypothesis test is not the same as the original data, but instead a randomized version that guarantees precise measures of privacy (information leakage) and utility. Specifically, we use MI  as a measure of the average information leakage between the input dataset and its randomized output dataset that is used by the test. The goal is to find the randomizing mapping, henceforth referred to a \textit{privacy mechanism}, such that a measure of utility of the data is maximized while ensuring that the MI-based leakages for all possible source classes are bounded.

We assume that the entries of the dataset are generated in an i.i.d.\ fashion. 
Focusing on the local privacy model, the randomizing privacy mechanism for the hypothesis testing problem is memoryless. Let $\mathbf{W}$, an $M\times N$ conditional probability matrix, denote this memoryless privacy mechanism which maps the $M$ letters of the input alphabet $\mathcal{X}$ to $N$ letters of the output alphabet $\hat{\mathcal{X}}$, where $N\geq 2$ is an arbitrary finite integer. 
Thus, the i.i.d.\ sequence $X^n \sim \mathbf{p}_k, k\in \{1,\ldots,m\}$, is mapped to an output sequence $\hat{X}^n$ whose entries $\hat{X}_j \in \hat{\mathcal{X}}$ for all $j\in \{1,\ldots,N\}$ are i.i.d.\ with the distribution $\mathbf{p}_k\mathbf{W}$. Thus, the hypothesis test is now performed on a sequence $\hat{X}^n$ that belongs to one of $m$ source classes with distributions\footnote{We remind that the distribution $\mathbf{p}\mathbf{W}$ is the {\em output distribution} induced by the input (row vector) $\mathbf{p}$ and the privacy mechanism (transition matrix) $\mathbf{W}$.} $\mathbf{p}_k\mathbf{W}$. For the $m$-ary setting, the error exponent, corresponding to the missed detection of $H_1$ as $H_k$, is $D(\bp_k\bW\|\bp_1\bW)$.

\subsection{The Privacy-Utility Trade-off}
To design an appropriate privacy mechanism, we wish to maximize the minimum of the $m-1$ error exponents  $D(\bp_k\bW\|\bp_1\bW)$ subject to the following leakage constraints: $I(\mathbf{p}_k,\mathbf{W})\leq \epsilon_k$ for $k\in\{1,\ldots,m\}$. Formally, the privacy-utility trade-off (PUT) problem is that finding the optimal privacy mechanism $\mathbf{W}^*$ of the following optimization:
\begin{equation}\label{eq:MHT_original}
\begin{aligned}
\max_{\substack{\bW\in\mathcal{W}}} \quad & \min_{k=2,\ldots, m} D(\bp_k\bW\|\bp_1\bW)\\
\mathrm{s.t.} \quad & I(\bp_k,\bW)\leq \epsilon_k &k=1,2,\ldots,m&	
\end{aligned}
\end{equation}
where 
$\mathcal{W}$ is the set of   $M\times N$ row stochastic matrices, and $\epsilon_k \in [0, H(\mathbf{p}_k)]$, $k\in\{1,\ldots,m\}$, are permissible upper bounds on $I(\bp_k,\mathbf{W})$. The optimization in \eqref{eq:MHT_original} maximizes the minimum of $m-1$ convex functions over a convex set. Since the maximum of each of the $m-1$ convex functions are attained on the boundary of the feasible region, the optimal solution  of the optimization is also on the boundary. Because of the MI constraints, the feasible region is, in general, not a polytope, and thus, has infinitely many extremal points. While there exist computationally tractable methods to obtain a solution by approximating the feasible region by an intersection of polytopes \cite{MinConcave}, our focus is on developing a principled approximation for \eqref{eq:MHT_original} in a specific privacy regime to obtain a closed-from and easily-interpretable privacy mechanism. 

Specifically, we will work in the high privacy regime in which $\epsilon_k$ is small. In this regime, one can use Taylor series expansions to approximate both the objective function and the constraints. Such approximations  were   considered in \cite{ITSt_Tutorial, EITzheng2008}. More recently, analyses based on such approximations, referred to as E-IT, have been found to be useful in a variety settings from  graphical model learning~\cite{Tan11} to network information theory problems \cite{EITzheng2008}\cite{EIT2015}.

\section{Approximations in the High Privacy Regime}\label{section:approximation}
In this section, we develop E-IT approximations of the relative entropy $D(\bp_k\bW\|\bp_1\bW)$ and the MI $I(\bp_k,\bW)$ functions, based on which we propose an approximation of PUT in \eqref{eq:MHT_original} in the high privacy regime. 

To develop an approximation, we select an operating point which will be perturbed to provide an approximately-optimal privacy mechanism. We let $\epsilon_k\in [0,\epsilon^*] \text{ for all }k$ where  $\epsilon^* \ll \min\{H(\mathbf{p}_k),k\in\{1,\ldots,m\}\}$.  Since our focus is on the high privacy regime, we present the approximation around a perfect privacy operation point, i.e., a privacy mechanism $\mathbf{W}_0$ that achieves $\epsilon_k=0$ for all $k$.

\begin{lemma}\label{lemma:perfect_privacy}
	For perfect privacy, i.e., $\epsilon_k=0$ for all $k\in\{1,\ldots,m\}$, the privacy mechanism $\mathbf{W}_0$ is a rank-$1$ row stochastic matrix with every row being equal to a row vector $\mathbf{w}_0$ where $\mathbf{w}_0$ belongs to probability simplex, such that the entries $w_{0j},j\in \{1,\ldots,N\}$ of the vector $\mathbf{w}_0$ satisfy
	\begin{align}
	&\sum_{j=1}^{N}w_{0j}=1,\qquad \mathrm{and}\\
	&w_{0j} \geq 0 \quad \forall j\in \{1, \ldots, N\} .
	\end{align}
\end{lemma}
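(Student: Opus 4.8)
The plan is to exploit the elementary characterization that mutual information vanishes precisely when the channel input and output are statistically independent. For $X \sim \mathbf{p}_k$ passed through the mechanism $\mathbf{W}$, I would invoke the standard identity expressing mutual information as an average relative entropy between the conditional and marginal output laws, namely
\begin{equation}
I(\mathbf{p}_k,\mathbf{W}) = \sum_{x\in\mathcal{X}} p_{k,x}\, D\!\left(\mathbf{W}_x \,\big\|\, \mathbf{p}_k\mathbf{W}\right),
\end{equation}
where $\mathbf{W}_x$ denotes the $x^{\mathrm{th}}$ row of $\mathbf{W}$ and $\mathbf{p}_k\mathbf{W}$ is the induced output distribution. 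Since each summand is nonnegative and each weight $p_{k,x}$ is nonnegative, the whole sum is zero if and only if every term with $p_{k,x}>0$ is individually zero; because relative entropy vanishes only when its two arguments coincide, this forces $\mathbf{W}_x = \mathbf{p}_k\mathbf{W}$ for every $x$ in the support of $\mathbf{p}_k$. In words, perfect privacy pins down all active rows of $\mathbf{W}$ to the single common vector $\mathbf{w}_0 := \mathbf{p}_k\mathbf{W}$.

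Next, to upgrade this from ``all active rows coincide'' to ``the whole matrix is rank one,'' I would appeal to the modeling assumption that the hypothesis distributions $\mathbf{p}_k$ have full support, which is the natural requirement keeping the relative entropies in \eqref{eq:MHT_original} finite. Under full support the condition $p_{k,x}>0$ ranges over all of $\mathcal{X}$, so even a single constraint $I(\mathbf{p}_1,\mathbf{W})=0$ already forces every row of $\mathbf{W}$ to equal $\mathbf{w}_0$. The remaining constraints are then automatically consistent, since for a matrix whose rows are all equal to $\mathbf{w}_0$ one has $\mathbf{p}_k\mathbf{W}=\mathbf{w}_0$ for every $k$ (because $\mathbf{p}_k$ sums to one). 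This yields exactly the claimed rank-$1$ structure with every row equal to $\mathbf{w}_0$.

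Finally, the probability-simplex constraints come for free: because $\mathbf{W}_0$ is row stochastic by definition of $\mathcal{W}$, each of its identical rows, and hence $\mathbf{w}_0$, is a valid probability mass function, giving $\sum_{j=1}^{N} w_{0j}=1$ and $w_{0j}\geq 0$ for all $j$. I expect the only delicate point to be the treatment of the support hypothesis. If some $\mathbf{p}_k$ failed to have full support, the corresponding unused rows of $\mathbf{W}$ would be left unconstrained by the mutual-information conditions; in that case I would simply set those rows equal to $\mathbf{w}_0$ without loss of generality, so the clean rank-$1$ conclusion persists. I would therefore either state the full-support assumption explicitly or record this convention, so that the characterization in the lemma holds verbatim.
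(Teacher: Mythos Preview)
Your proposal is correct and follows essentially the same approach as the paper. The paper writes out $I(\mathbf{p},\mathbf{W})$ explicitly and applies the log-sum inequality over the output index $j$ for each fixed input $i$, which is precisely the statement that each $D(\mathbf{W}_i\|\mathbf{p}\mathbf{W})\ge 0$ with equality iff $\mathbf{W}_i=\mathbf{p}\mathbf{W}$; your decomposition $I(\mathbf{p}_k,\mathbf{W})=\sum_x p_{k,x}\,D(\mathbf{W}_x\|\mathbf{p}_k\mathbf{W})$ packages the same step at a slightly higher level, and your explicit handling of the full-support assumption is in fact more careful than the paper, which silently assumes it.
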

\begin{proof}
	For any probability distribution $\mathbf{p}$ with entries $p_i, \,i\in \{1,\ldots,M\}$, and a privacy mechanism $\mathbf{W}$, 
	\begin{align}
	I(\mathbf{p},\mathbf{W})
	&= \sum_{i=1}^{M}\sum_{j=1}^{N}p_iW_{ij}\log\frac{W_{ij}}{\sum_{i=1}^{M}p_iW_{ij}}\\
	\label{eq:perfectprivacy_logInequality}
	&	\geq  \sum_{i=1}^{M}p_i\bigg(\sum_{j=1}^{N}W_{ij}\bigg)\log\frac{\sum_{j=1}^{N}W_{ij}}{\sum_{j=1}^{N}\sum_{i=1}^{M}p_iW_{ij}}\\*
	&= \sum_{i=1}^{M}p_i\sum_{j=1}^{N}W_{ij}\log\frac{1}{1}=0
	\end{align}
	where \eqref{eq:perfectprivacy_logInequality} results from the log-sum inequality. Equality in \eqref{eq:perfectprivacy_logInequality} holds if and only if \cite[Theorem 2.7.1]{IT_Cover}
	\begin{align}
	\label{eq:perfectprivacy_W}
	W_{ij}=\sum_{k=1}^{M}p_kW_{kj},\;\; i\in \{1, \ldots, M\} ,j\in \{1, \ldots, N\}.
	\end{align}
	In other words, perfect privacy, i.e., zero leakage, is achieved when every row of the optimal mechanism $\mathbf{W}_0$ is the same and is equal to the probability distribution $\mathbf{w}_0=\mathbf{p}\mathbf{W}_0$. 
\end{proof}
Thus, for the perfect privacy setting, the optimal mechanism satisfying \eqref{eq:perfectprivacy_W} does not rely on the input distribution. 
\begin{remark}
	Note that, for any $\mathbf{W}_0$ satisfying \eqref{eq:perfectprivacy_W} that achieves perfect privacy, the utility is $D(\mathbf{p}_k\mathbf{W}_0 \| \mathbf{p}_1\mathbf{W}_0)=0$ for all $k\in\{2,\ldots,m\}$. Furthermore, the rows of $\mathbf{W}_0$, i.e., $\mathbf{w}_0$, can take any value in an $N$-dimensional probability simplex. 
\end{remark}

The following proposition presents a E-IT approximation for the objective and constraint functions of the optimization in \eqref{eq:MHT_original}, i.e., the relative entropy $D(\bp_k\bW\|\bp_1\bW)$ and MI $I(\bp_k,\bW)$. This approximation is only applicable to the high privacy regime in which the privacy mechanism $\mathbf{W}$ is modeled as a perturbation of a  $\mathbf{W}_0$ per Lemma \ref{lemma:perfect_privacy}.
\begin{proposition}\label{Proposition:approximation_MIandRE}
	In the high privacy regime with 
	$0\leq\epsilon_k \ll \min\{H(\mathbf{p}_k),k\in\{1,\ldots,m\}\}$, 
	the privacy mechanism $\mathbf{W}$ is chosen as a perturbation of a perfect privacy $(\epsilon_k=0 \text{ for all k})$ achieving mechanism $\mathbf{W}_0$, i.e., $\mathbf{W}=\mathbf{W}_0+\boldsymbol{\Theta}$. The mechanism $\mathbf{W}_0$ is a rank-1 row stochastic matrix with every row being equal to a row vector $\mathbf{w}_0$ whose entries $w_{0j}$ satisfy $\sum_{j=1}^{N}w_{0j}=1$ and $w_{0j}>0$, for all $j\in\{1,\ldots,N\}$. The perturbation matrix $\boldsymbol{\Theta}$ is an $M\times N$ matrix with entries $\Theta_{ij}$ satisfying $\sum_{j=1}^{N}\Theta_{ij}=0$ and $|\Theta_{ij}| \leq \rho w_{0j}$, for all $i\in \{1, \ldots, M\},j\in \{1, \ldots, N\}$. For this perturbation model, the relative entropy $D(\bp_k\bW\|\bp_1\bW)$ for all $k\in \{2, \ldots, m\}$ in the objective function, and the MI $I(\bp_k,\bW)$ for all $k\in \{1, \ldots, m\}$ in the constraints of \eqref{eq:MHT_original} can be approximated as
	\begin{align}
	\label{eq:EITapproximation_RE}
	D(\bp_k\bW\|\bp_1\bW)&\approx\frac{1}{2}\big\|(\mathbf{p}_k-\mathbf{p}_1)\boldsymbol{\Theta}[(\mathbf{w}_0)^{-\frac{1}{2}}]\big\|^2\\
	\label{eq:EITapproximation_MI}
   I(\bp_k,\bW)&\approx	\frac{1}{2}\sum_{i=1}^{M}p_{2i}\big\|\boldsymbol{\Theta}_i[(\mathbf{w}_0)^{-\frac{1}{2}}]\big\|^2
	\end{align}
	where $p_{ki}$, for $k\in\{1,\ldots,m\}$, is the $i^{\mathrm{th}}$ entry of $\mathbf{p}_k$, $\boldsymbol{\Theta}_i$ is the $i^{\mathrm{th}}$ row of $\boldsymbol{\Theta}$, and $[(\mathbf{w}_0)^{-\frac{1}{2}}]$ is a diagonal matrix with $i^{\mathrm{th}}$ diagonal entry, for all $i$, being $(w_{0i})^{-\frac{1}{2}}$. For ease of analysis, setting $\mathbf{A}=\boldsymbol{\Theta}[(\mathbf{w}_0)^{-\frac{1}{2}}]$, \eqref{eq:EITapproximation_RE} and \eqref{eq:EITapproximation_MI} can be rewritten as 
		\begin{align}
		\label{eq:EITapproximation_RE_A}
		D(\bp_k\bW\|\bp_1\bW) &\approx\frac{1}{2}\|(\mathbf{p}_k-\mathbf{p}_1)\mathbf{A}\|^2\\
		\label{eq:EITapproximation_MI_A}
		I(\bp_k,\bW)&\approx \frac{1}{2}\sum_{i=1}^{M}p_{ki}\|\mathbf{A}_i\|^2
		\end{align}
		In \eqref{eq:EITapproximation_RE} and~\eqref{eq:EITapproximation_RE_A}, the notation $\approx$ means that the difference between the left and right sides is $o( \| \mathbf{p}_k-\mathbf{p}_1 \|_\infty^2)$. Similarly,  in \eqref{eq:EITapproximation_MI} amd \eqref{eq:EITapproximation_MI_A}, $\approx$ means that the two sides differ by $o( \| \boldsymbol{\Theta} \|_\infty^2 )$.  
\end{proposition}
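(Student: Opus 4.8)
The plan is to linearize every quantity around the perfect-privacy point $\mathbf{W}_0$ and keep second-order terms. First I would substitute $\mathbf{W}=\mathbf{W}_0+\boldsymbol{\Theta}$ and record two elementary consequences of $\mathbf{W}_0$ having every row equal to $\mathbf{w}_0$. Since each $\mathbf{p}_k$ is a probability vector, $\mathbf{p}_k\mathbf{W}_0=\mathbf{w}_0$, so the output law is the perturbation $\mathbf{p}_k\mathbf{W}=\mathbf{w}_0+\mathbf{p}_k\boldsymbol{\Theta}$; and because $\mathbf{p}_k$ and $\mathbf{p}_1$ both sum to one, the $\mathbf{W}_0$-contribution cancels in the difference, giving $(\mathbf{p}_k-\mathbf{p}_1)\mathbf{W}=(\mathbf{p}_k-\mathbf{p}_1)\boldsymbol{\Theta}$. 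The row constraint $\sum_j\Theta_{ij}=0$ makes every relevant perturbation direction ($\boldsymbol{\Theta}_i$, $\mathbf{p}_k\boldsymbol{\Theta}$, $(\mathbf{p}_k-\mathbf{p}_1)\boldsymbol{\Theta}$) have zero coordinate sum, while $w_{0j}>0$ keeps all logarithms regular. Thus both the objective and the constraint reduce to relative entropies between distributions lying in a shrinking neighborhood of $\mathbf{w}_0$.

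The engine is a single second-order expansion. I would first establish the auxiliary fact that, for zero-sum perturbations $\mathbf{s},\mathbf{t}$,
\[
D(\mathbf{w}_0+\mathbf{s}\,\|\,\mathbf{w}_0+\mathbf{t})=\tfrac12\sum_j\frac{(s_j-t_j)^2}{w_{0j}}+(\text{higher order}),
\]
which is the usual $D\approx\tfrac12\chi^2$ estimate: write $\log\frac{w_{0j}+s_j}{w_{0j}+t_j}=\log(1+s_j/w_{0j})-\log(1+t_j/w_{0j})$, expand each logarithm to second order, and use $\sum_j s_j=\sum_j t_j=0$ to annihilate the first-order contributions. For the objective I apply this with $\mathbf{s}=\mathbf{p}_k\boldsymbol{\Theta}$, $\mathbf{t}=\mathbf{p}_1\boldsymbol{\Theta}$, so that $s_j-t_j=\big((\mathbf{p}_k-\mathbf{p}_1)\boldsymbol{\Theta}\big)_j$, which immediately yields \eqref{eq:EITapproximation_RE} and \eqref{eq:EITapproximation_RE_A}. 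The denominator replacement $(\mathbf{p}_1\mathbf{W})_j\to w_{0j}$ and the cubic Taylor remainder both carry an extra factor of the perturbation size, hence are of strictly higher order than the retained quadratic term, consistent with the stated $o(\|\mathbf{p}_k-\mathbf{p}_1\|_\infty^2)$.

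For the leakage I would use the decomposition $I(\mathbf{p}_k,\mathbf{W})=\sum_i p_{ki}\,D(\mathbf{W}_i\,\|\,\mathbf{p}_k\mathbf{W})$ and apply the same expansion row by row, with $\mathbf{s}=\boldsymbol{\Theta}_i$ and $\mathbf{t}=\mathbf{p}_k\boldsymbol{\Theta}$. Summing over $i$ and using $\sum_i p_{ki}\Theta_{ij}=(\mathbf{p}_k\boldsymbol{\Theta})_j$ together with $\sum_i p_{ki}=1$ collapses the cross terms and gives
\[
I(\mathbf{p}_k,\mathbf{W})\approx\tfrac12\sum_i p_{ki}\big\|\boldsymbol{\Theta}_i[(\mathbf{w}_0)^{-\frac12}]\big\|^2-\tfrac12\big\|(\mathbf{p}_k\boldsymbol{\Theta})[(\mathbf{w}_0)^{-\frac12}]\big\|^2,
\]
whose first term is exactly the right-hand side of \eqref{eq:EITapproximation_MI}/\eqref{eq:EITapproximation_MI_A} (the index in \eqref{eq:EITapproximation_MI} should read $p_{ki}$).

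The main obstacle is the second term above. Unlike the cubic remainders in the relative-entropy estimate, the correction $\tfrac12\|(\mathbf{p}_k\boldsymbol{\Theta})[(\mathbf{w}_0)^{-\frac12}]\|^2$ is genuinely of order $\|\boldsymbol{\Theta}\|^2$, the \emph{same} order as the term that is kept, so discarding it is not justified by a higher-order argument. Equivalently, by the compensation identity the stated right-hand side equals $\sum_i p_{ki}D(\mathbf{W}_i\|\mathbf{w}_0)=I(\mathbf{p}_k,\mathbf{W})+D(\mathbf{p}_k\mathbf{W}\|\mathbf{w}_0)$, i.e.\ it is an \emph{upper bound} on the mutual information that differs from it precisely by this $O(\|\boldsymbol{\Theta}\|^2)$ gap. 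I would therefore read \eqref{eq:EITapproximation_MI} as a conservative surrogate for the leakage (enforcing it below $\epsilon_k$ still guarantees $I(\mathbf{p}_k,\mathbf{W})\le\epsilon_k$), and I would scrutinize whether the authors instead argue the correction is dominated by the retained term, for instance via Jensen's inequality $\|(\mathbf{p}_k\boldsymbol{\Theta})[(\mathbf{w}_0)^{-\frac12}]\|^2\le\sum_i p_{ki}\|\boldsymbol{\Theta}_i[(\mathbf{w}_0)^{-\frac12}]\|^2$, and then absorbed. Finally, I would check the constant: a base-two logarithm introduces an extra factor $1/\ln 2$, so the clean $\tfrac12$ presumes relative entropy and mutual information are measured in nats.
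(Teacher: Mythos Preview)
Your approach mirrors the paper's almost exactly: the authors also write $\mathbf{W}=\mathbf{W}_0+\boldsymbol{\Theta}$, invoke the $\chi^2$ approximation of relative entropy, and for the leakage use the decomposition $I(\mathbf{p}_k,\mathbf{W})=\sum_i p_{ki}\,D(\mathbf{W}_i\|\mathbf{p}_k\mathbf{W})$, expand each summand, and then replace the denominator $\mathbf{W}_i$ (and later $\mathbf{p}_k\mathbf{W}$) by $\mathbf{w}_0$. Your auxiliary lemma $D(\mathbf{w}_0+\mathbf{s}\|\mathbf{w}_0+\mathbf{t})\approx\tfrac12\sum_j(s_j-t_j)^2/w_{0j}$ is precisely the tool they use, just packaged once rather than twice.

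The issue you flag is real and is present in the paper's own proof. Their penultimate step is
\[
I(\mathbf{p}_k,\mathbf{W})\;\approx\;\tfrac12\sum_i p_{ki}\big\|(\mathbf{w}_0+\boldsymbol{\Theta}_i-\mathbf{p}_k\mathbf{W})[(\mathbf{w}_0)^{-\frac12}]\big\|^2
\;=\;\tfrac12\sum_i p_{ki}\big\|(\boldsymbol{\Theta}_i-\mathbf{p}_k\boldsymbol{\Theta})[(\mathbf{w}_0)^{-\frac12}]\big\|^2,
\]
and their final step simply replaces $\boldsymbol{\Theta}_i-\mathbf{p}_k\boldsymbol{\Theta}$ by $\boldsymbol{\Theta}_i$, with no further argument. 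As you observe, expanding this variance-type expression yields exactly your missing term $-\tfrac12\|(\mathbf{p}_k\boldsymbol{\Theta})[(\mathbf{w}_0)^{-\frac12}]\|^2$, which is $O(\|\boldsymbol{\Theta}\|_\infty^2)$, not $o(\|\boldsymbol{\Theta}\|_\infty^2)$ as the proposition asserts. So the paper does not provide the justification you are looking for; your reading of \eqref{eq:EITapproximation_MI} as an upper surrogate (via the Jensen inequality you wrote, or equivalently via the compensation identity) is the correct way to make the statement rigorous, and is \emph{stronger} than what the paper actually argues. Your side remarks are also on target: the $p_{2i}$ in \eqref{eq:EITapproximation_MI} is a typo for $p_{ki}$, and the paper's convention $\log=\log_2$ indeed leaves a suppressed $1/\ln 2$ in front of the $\tfrac12$.
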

	Note that in Proposition \ref{Proposition:approximation_MIandRE}, $\bw_0$ is in the interior of the probability simplex, i.e., $\bw_0>0$. The approximation results from the observation that all rows of a privacy mechanism $\bW$ in the high privacy (low leakage) regime are very close to each other and both the relative entropy and MI can be approximated by the $\chi^2$ divergence. The detailed proof is in Appendix~\ref{proof:Proposition_approximation_MIandRE}.

\section{Binary Hypothesis Testing in the High Privacy Regime}\label{section:BHT}
For binary hypothesis testing, there  are only two hypotheses $H_1:\bp=\bp_1$ and $H_2:\bp=\bp_2$, and therefore, only two types of errors. In this section, we consider the simplest hypothesis testing scenario under two regimes. First, we regard one of the two hypotheses (e.g., $H_1$) as being more important than the other. In this case, the goal is to maximize the exponent of the missed detection for $H_1$ subject to an upper bound on its false alarm probability. Second, both hypotheses are important and the goal is to maximize a weighted sum of the two exponents of the false alarm and missed detection. For both cases, we derive the PUTs in the high privacy regime and provide  methods to attain explicit privacy mechanisms. 

\subsection{Binary Hypothesis Testing (Relative Entropy Setting)}\label{subsection:BHT_relative}
We consider the case in which the false alarm of $H_1$ is  bounded by a fixed positive constant and we examine the fastest rate of decay of its missed detection. This is exactly the problem formulated in Section \ref{section:Problem_Formulation}, and the PUT in~\eqref{eq:MHT_original} becomes
\begin{equation}\label{eq:OrigProHP}
\begin{aligned}
\max_{\substack{\mathbf{W}\in\mathcal{W}}}\quad & D(\mathbf{p}_2\mathbf{W} \| \mathbf{p}_1\mathbf{W})\\		
\mathrm{s.t.} \quad & I(\mathbf{p}_k,\mathbf{W})\leq \epsilon_k \quad k=1,2\\
\end{aligned}
\end{equation}
where $\mathcal{W}$ is the set of all $M\times N$ row stochastic matrices, and $\epsilon_k \in [0, H(\mathbf{p}_k)]$,   are the permissible upper bounds of the privacy leakages for the two distributions $\bp_1$ and $\bp_2$, respectively.
Using the approximations in Proposition~\ref{Proposition:approximation_MIandRE}, the PUT for the E-IT approximation problem in the high privacy regime with $0\leq\epsilon_k \ll \min(H(\mathbf{p}_1),H(\mathbf{p}_2))$, for all $k \in \{1,2\}$, is 
\begin{subequations}\label{eq:BHT_approximation}
		\begin{align}
		\label{eq:BHT_approximation_objective}
		\max_{\substack{\mathbf{A}}}\quad & \frac{1}{2}(\mathbf{p}_2-\mathbf{p}_1)\mathbf{A}\mathbf{A}^T(\mathbf{p}_2-\mathbf{p}_1)^T\\
		\label{eq:BHTapprox_privacyconstraint}
		\mathrm{s.t.} \quad &\frac{1}{2}\sum_{i=1}^{M}p_{ki}\|\mathbf{A}_i\|^2\leq \epsilon_k\quad k=1,2\\
		\label{eq:BHTapprox_Aconstraint}
		& \mathbf{A}(\sqrt{\mathbf{w}_0})^T=\mathbf{0}.
		\end{align}
\end{subequations}
where $\mathbf{A}_{i}$ is the $i$-th row of the $M\times N$ matrix $\mathbf{A}$, $\bw_0$ is an interior point of the $N$-dimensional probability simplex, and $\sqrt{\mathbf{w}_0}$ is a row vector with the $i^{\text{th}}$ entry being the squared root of the $i^{\text{th}}$ entry of $\bw_0$, i.e., $\sqrt{w_{0,i}}$. \\
\begin{remark}
	The functions in \eqref{eq:BHT_approximation_objective} and \eqref{eq:BHTapprox_privacyconstraint} are the E-IT approximations as presented in Proposition \ref{Proposition:approximation_MIandRE}, and the constraint \eqref{eq:BHTapprox_Aconstraint} results from the requirement that $\bW$ is row stochastic. This constraint is the only one in \eqref{eq:BHT_approximation} that explicitly involves the size of the output alphabet, i.e., the length of~$\bw_0$.
\end{remark}   

\begin{theorem}\label{Theorem:BHT_approximation_simplification}
	The optimization problem in \eqref{eq:BHT_approximation} reduces to one with a vector variable $\mathbf{a} \in \mathbb{R}^{M }$ as
	\begin{equation}\label{eq:BHT_approx_2}
	\begin{aligned}
	\max_{\substack{\mathbf{a}}}\quad & 
	\frac{1}{2}\|\mathbf{a}(\mathbf{p}_2-\mathbf{p}_1)^T\|^2\\
	\mathrm{s.t.} \quad &\frac{1}{2}\mathbf{a}[\mathbf{p}_k]\mathbf{a}^T\leq \epsilon_k\quad k=1,2\\
	\end{aligned}
	\end{equation}
	where the absolute value of the $i^{\mathrm{th}}$ entry $a_i$ of $\mathbf{a}$, for all $i \in \{1,..,M\}$, is the Euclidean norm of the $i^{\mathrm{th}}$ row $\mathbf{A}_i$ of $\mathbf{A}$. The $M\times N$ matrix $\mathbf{A}^*$ optimizing \eqref{eq:BHT_approximation} is obtained from the optimal solution $\mathbf{a}^*$ of \eqref{eq:BHT_approx_2} as 
	a rank-1 matrix whose $i^{\mathrm{th}}$ row, for all $i$, is given by $a_i^*\mathbf{v}$ where $a_i^*$ is the $i^{\mathrm{th}}$ entry of  $\mathbf{a}^*$, and $\mathbf{v}$ is a unit-norm $N$-dimensional vector that is orthogonal to the non-zero-entry vector $\sqrt{\mathbf{w}_0}$, such that
	\begin{align}
	\label{eq:lemma1_Aopt}
	&\mathbf{A}^*=(\mathbf{a}^*)^T\mathbf{v}\\
	\label{eq:lemma1_v}
	&\mathbf{v}(\sqrt{\mathbf{w}_0})^T=0\\
	\label{eq:lemma1_vnorm}
	&\|\mathbf{v}\|=1.
	\end{align}    
	Finally, it suffices to restrict the output to a binary alphabet, i.e., $N=2$. 
\end{theorem}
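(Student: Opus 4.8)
The plan is to show that \eqref{eq:BHT_approximation} and \eqref{eq:BHT_approx_2} have the same optimal value, and that an optimizer of the matrix problem is recovered from an optimizer of the vector problem via the rank-1 lift \eqref{eq:lemma1_Aopt}. The governing observation is a \emph{decoupling}: the objective \eqref{eq:BHT_approximation_objective} equals $\frac{1}{2}\|(\mathbf{p}_2-\mathbf{p}_1)\mathbf{A}\|^2=\frac{1}{2}\big\|\sum_{i=1}^M(p_{2i}-p_{1i})\mathbf{A}_i\big\|^2$, so it sees the rows $\mathbf{A}_i$ as vectors, whereas each privacy constraint \eqref{eq:BHTapprox_privacyconstraint} depends on $\mathbf{A}$ only through the row norms $\|\mathbf{A}_i\|$. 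Since only the lengths of the rows are constrained while the objective rewards aligning their directions, the problem should collapse to one over the length vector $\mathbf{a}$.

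First I would prove ``reduced $\le$ original''. Given any feasible $\mathbf{a}$ for \eqref{eq:BHT_approx_2}, set $\mathbf{A}=\mathbf{a}^T\mathbf{v}$ with $\mathbf{v}$ a unit vector obeying \eqref{eq:lemma1_v}; such $\mathbf{v}$ exists because the orthogonal complement of $\sqrt{\mathbf{w}_0}$ has dimension $N-1\ge 1$. Then $\mathbf{A}(\sqrt{\mathbf{w}_0})^T=\mathbf{a}^T\big(\mathbf{v}(\sqrt{\mathbf{w}_0})^T\big)=\mathbf{0}$, so \eqref{eq:BHTapprox_Aconstraint} holds; $\|\mathbf{A}_i\|=|a_i|$, so $\frac12\sum_i p_{ki}\|\mathbf{A}_i\|^2=\frac12\mathbf{a}[\mathbf{p}_k]\mathbf{a}^T$ reproduces both leakage constraints; and $\|(\mathbf{p}_2-\mathbf{p}_1)\mathbf{A}\|^2=\big((\mathbf{p}_2-\mathbf{p}_1)\mathbf{a}^T\big)^2\|\mathbf{v}\|^2=\|\mathbf{a}(\mathbf{p}_2-\mathbf{p}_1)^T\|^2$ reproduces the objective. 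Hence every feasible point of the vector problem lifts to a feasible point of the matrix problem of identical value.

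The harder direction, ``original $\le$ reduced'', is where I expect the real work. Given any feasible $\mathbf{A}$, define $\mathbf{a}$ by $a_i=\mathrm{sign}(p_{2i}-p_{1i})\,\|\mathbf{A}_i\|$, so that $a_i^2=\|\mathbf{A}_i\|^2$ and $\mathbf{a}$ inherits feasibility. The triangle inequality then yields
\[
\|(\mathbf{p}_2-\mathbf{p}_1)\mathbf{A}\|=\Big\|\sum_i (p_{2i}-p_{1i})\mathbf{A}_i\Big\|\le \sum_i |p_{2i}-p_{1i}|\,\|\mathbf{A}_i\|=\sum_i a_i(p_{2i}-p_{1i})=\mathbf{a}(\mathbf{p}_2-\mathbf{p}_1)^T,
\]
so the matrix objective at $\mathbf{A}$ is at most the vector objective at $\mathbf{a}$. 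The point requiring care is the sign bookkeeping: because the constraints are blind to the signs of the $a_i$, we are free to align each $a_i$ with $p_{2i}-p_{1i}$ and saturate the triangle inequality, which is precisely the alignment (all $(p_{2i}-p_{1i})\mathbf{A}_i$ made parallel) produced by the rank-1 lift of the first direction. Combining the two inequalities gives equality of optimal values and certifies \eqref{eq:lemma1_Aopt} as an optimizer.

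Finally, that $N=2$ suffices is immediate: the reduced problem \eqref{eq:BHT_approx_2} makes no reference to $N$, so its optimal value and optimizer $\mathbf{a}^*$ are independent of the output alphabet size, and $N$ enters the lift only through the existence of a unit $\mathbf{v}\perp\sqrt{\mathbf{w}_0}$, which already exists at $N=2$ (explicitly $\mathbf{v}=(\sqrt{w_{02}},-\sqrt{w_{01}})$, of unit norm since $w_{01}+w_{02}=1$). Thus enlarging the output alphabet cannot raise the E-IT objective, and a binary output is optimal.
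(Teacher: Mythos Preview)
Your proof is correct and reaches the same conclusion, but by a different route from the paper. The paper first drops the orthogonality constraint \eqref{eq:BHTapprox_Aconstraint} and reparametrizes $\mathbf{A}$ by the row-norm vector $|a_i|=\|\mathbf{A}_i\|$ together with the matrix $\boldsymbol{\Omega}$ of cosines $\Omega_{ij}=\cos\angle(\mathbf{A}_i,\mathbf{A}_j)$; for fixed row norms the objective is linear in $\boldsymbol{\Omega}$ over the hypercube $|\Omega_{ij}|\le 1$, so an extreme point (all $\Omega_{ij}=\pm 1$ with consistent signs) is optimal, forcing the rows of an optimizer to be parallel and $\mathbf{A}^*$ to be rank~$1$, after which \eqref{eq:BHTapprox_Aconstraint} is reinstated through the choice of $\mathbf{v}$. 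Your argument replaces this linear-program-over-a-hypercube step with a single application of the triangle inequality, $\big\|\sum_i (p_{2i}-p_{1i})\mathbf{A}_i\big\|\le\sum_i |p_{2i}-p_{1i}|\,\|\mathbf{A}_i\|$, where your sign choice $a_i=\mathrm{sign}(p_{2i}-p_{1i})\|\mathbf{A}_i\|$ plays exactly the role of the paper's extreme-point cosines. This is more elementary and dispatches \eqref{eq:BHTapprox_Aconstraint} in one stroke via the lift; the paper's parametrization, by contrast, makes the emergence of the rank-$1$ structure more explicit before the constraint is brought back. One minor point: your claim ``$a_i^2=\|\mathbf{A}_i\|^2$'' fails when $p_{2i}=p_{1i}$ (then $a_i=0$), but this is harmless since $a_i^2\le\|\mathbf{A}_i\|^2$ still yields feasibility and that coordinate contributes nothing to either objective; adopting the convention $\mathrm{sign}(0)=1$ removes the wrinkle entirely.
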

The proof of Theorem \ref{Theorem:BHT_approximation_simplification} is in Appendix \ref{proof:BHT_approximation_simplification}. We briefly summarize the approach. The simplification of \eqref{eq:BHT_approximation} to a vector optimization in \eqref{eq:BHT_approx_2} results from the observation that the privacy constraint \eqref{eq:BHTapprox_privacyconstraint} only restricts the row-norms of the matrix variable $\bA$, whereas $\bA$ affects the objective~\eqref{eq:BHT_approximation_objective} through all inner products of rows in $\bA$. By exploiting this special structure, we simplify \eqref{eq:BHT_approximation} to a quadratically constrained quadratic program (QCQP) with a vector variable $\ba$ which governs the Euclidean norms of the rows in $\bA$. The optimal $\bA^*$ is then given by \eqref{eq:lemma1_Aopt} such that the row vector $\bv$ is chosen to satisfy \eqref{eq:BHTapprox_Aconstraint}. Since \eqref{eq:lemma1_v} can be satisfied by a 2-dimensional $\bv$, we conclude  that a binary output alphabet suffices.

Note that the objective function and constraints of the QCQP in \eqref{eq:BHT_approx_2} are ``even'' functions,  i.e., if $\ba$ is feasible, so is its  negation   $-\ba$ and both of them yield the same objective value. Using this observation,  we derive a convex program by removing the square in the objective function. The following theorem provides a closed-form privacy mechanism for the PUT \eqref{eq:BHT_approximation} in high privacy regime by using the Karush-Kuhn-Tucker (KKT) conditions for convex programs.
\begin{theorem}\label{Theorem:BHT_optimalsol_HPapprox}
	An optimal privacy mechanism $\mathbf{W}'$ for the approximation problem in \eqref{eq:BHT_approximation} is
	\begin{align}
	\label{eq:W_generation}
	\mathbf{W}'=\mathbf{W}_0+(\mathbf{a}^*)^T\mathbf{v}\cdot\big[\sqrt{\mathbf{w}_0}\big]
	\end{align}
	where $\mathbf{W}_0$ is given by Proposition \ref{Proposition:approximation_MIandRE}, $\mathbf{v}$ is chosen to satisfy~\eqref{eq:lemma1_v} and \eqref{eq:lemma1_vnorm}, and for $\lambda_{\mathrm{p}}=\|\mathbf{p}_2-\mathbf{p}_1\|^2$ and $\mathbf{v}_{\mathrm{p}}=\frac{\mathbf{p}_2-\mathbf{p}_1}{\|\mathbf{p}_2-\mathbf{p}_1\|}$ being the eigenvalue and eigenvector  of $(\mathbf{p}_2-\mathbf{p}_1)^T(\mathbf{p}_2-\mathbf{p}_1)$, the optimal solution of \eqref{eq:BHT_approx_2}, namely $\mathbf{a}^*$, is given as:
	\begin{enumerate}
		\item if only the first constraint in \eqref{eq:BHT_approx_2} is active,
		\begin{align}
		\label{eq:const1_active}
		\frac{\mathbf{v}_{\mathrm{p}}\Big[\frac{\mathbf{p}_2}{(\mathbf{p}_1)^2}\Big](\mathbf{v}_{\mathrm{p}})^T}{\mathbf{v}_{\mathrm{p}}\big[(\mathbf{p}_1)^{-1}\big](\mathbf{v}_{\mathrm{p}})^T} < \frac{\epsilon_2}{\epsilon_1},
		\end{align}
		and the optimal solution $\mathbf{a}^*$ is
		\begin{align}
		\label{eq:optalpha_const1}
		\mathbf{a}^*=\pm\sqrt{\frac{2\epsilon_1}{\mathbf{v}_{\mathrm{p}}\big[(\mathbf{p}_1)^{-1}\big](\mathbf{v}_{\mathrm{p}})^T}}\mathbf{v}_{\mathrm{p}}\big[(\mathbf{p}_1)^{-1}\big];
		\end{align}
		\item if only the second constraint in \eqref{eq:BHT_approx_2} is active,
		\begin{align}
		\label{eq:const2_active}
		\frac{\mathbf{v}_{\mathrm{p}}\Big[\frac{\mathbf{p}_1}{(\mathbf{p}_2)^2}\Big](\mathbf{v}_{\mathrm{p}})^T}{\mathbf{v}_{\mathrm{p}}\big[(\mathbf{p}_2)^{-1}\big](\mathbf{v}_{\mathrm{p}})^T} < \frac{\epsilon_1}{\epsilon_2},
		\end{align}
		and the optimal solution $\mathbf{a}^*$ is
		\begin{align}
		\label{eq:optalpha_const2}
		\mathbf{a}^*=\pm\sqrt{\frac{2\epsilon_2}{\mathbf{v}_{\mathrm{p}}\big[(\mathbf{p}_2)^{-1}\big](\mathbf{v}_{\mathrm{p}})^T}}\mathbf{v}_{\mathrm{p}}\big[(\mathbf{p}_2)^{-1}\big];
		\end{align}	
		\item when both constraints in \eqref{eq:BHT_approx_2} are active, the optimal solution $\mathbf{a}^*$ is
		\begin{align}
		\label{eq:optimalalpha}
		\mathbf{a}^*=\pm\frac{\lambda_{\mathrm{p}}}{2}\mathbf{v}_{\mathrm{p}}\Big(\eta_1^*\big[\mathbf{p}_1\big]+\eta_2^*\big[\mathbf{p}_2\big]\Big)^{-1}
		\end{align}
		where $\eta_1^*>0$ and $\eta_2^*>0$ satisfy
		\begin{align}
		\label{eq:eta1} \mathbf{v}_{\mathrm{p}}\big[\mathbf{p}_1\big]\Big(\eta_1^*\big[\mathbf{p}_1\big]+\eta_2^*\big[\mathbf{p}_2\big]\Big)^{-2}(\mathbf{v}_{\mathrm{p}})^T&=\frac{8\epsilon_1}{(\lambda_{\mathrm{p}})^2}\\
		\label{eq:eta2}
		\mathbf{v}_{\mathrm{p}}\big[\mathbf{p}_2\big]\Big(\eta_1^*\big[\mathbf{p}_1\big]+\eta_2^*\big[\mathbf{p}_2\big]\Big)^{-2}(\mathbf{v}_{\mathrm{p}})^T&=\frac{8\epsilon_2}{(\lambda_{\mathrm{p}})^2}.
		\end{align}
	\end{enumerate}	
\end{theorem}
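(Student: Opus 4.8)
The plan is to first convert the quadratically constrained quadratic program (QCQP) in \eqref{eq:BHT_approx_2} into a genuinely convex program, and then read off the optimizer from its Karush--Kuhn--Tucker (KKT) system. As noted after Theorem~\ref{Theorem:BHT_approximation_simplification}, the objective $\frac{1}{2}\|\mathbf{a}(\mathbf{p}_2-\mathbf{p}_1)^{T}\|^2$ and both constraints are even in $\mathbf{a}$, and the feasible set is symmetric about the origin. Writing $s(\mathbf{a}) := \mathbf{a}(\mathbf{p}_2-\mathbf{p}_1)^{T}$, which is linear with $s(-\mathbf{a})=-s(\mathbf{a})$, symmetry of the feasible set $F$ gives $\max_{\mathbf{a}\in F}\frac{1}{2}s(\mathbf{a})^2=\frac{1}{2}\big(\max_{\mathbf{a}\in F}s(\mathbf{a})\big)^2$, and the maximizer of the squared problem is either the maximizer of $s$ or its negation; this is the origin of the sign $\pm$ appearing throughout the statement. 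Maximizing $\frac12 s^2$ thus reduces to maximizing the linear functional $s(\mathbf{a})$ over the intersection of the two solid ellipsoids $\{\frac{1}{2}\mathbf{a}[\mathbf{p}_k]\mathbf{a}^{T}\le\epsilon_k\}$, a convex program. Assuming $\mathbf{p}_k>0$ so that $[\mathbf{p}_k]$ is invertible and each ellipsoid is bounded, Slater's condition holds (the origin is strictly feasible), so the KKT conditions are both necessary and sufficient for global optimality.

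Next I would form the Lagrangian with multipliers $\eta_1,\eta_2\ge0$ and set its gradient in $\mathbf{a}$ to zero. Stationarity yields $\mathbf{a}\big(\eta_1[\mathbf{p}_1]+\eta_2[\mathbf{p}_2]\big)\propto(\mathbf{p}_2-\mathbf{p}_1)=\sqrt{\lambda_{\mathrm{p}}}\,\mathbf{v}_{\mathrm{p}}$, i.e. $\mathbf{a}^{*}\propto\mathbf{v}_{\mathrm{p}}\big(\eta_1[\mathbf{p}_1]+\eta_2[\mathbf{p}_2]\big)^{-1}$, which already fixes the \emph{direction} of the optimizer; its scale and the multipliers are then pinned down by complementary slackness and the binding constraints. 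Because the objective is linear and grows without bound along the ray $+\mathbf{v}_{\mathrm{p}}$ absent constraints, the optimum must lie on the boundary of $F$, so at least one constraint is active; the three cases in the statement (only the first active, only the second active, both active) are therefore exhaustive.

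I would then treat the active-set cases in turn. In Case~1 I set $\eta_2=0$, so $\mathbf{a}^{*}\propto\mathbf{v}_{\mathrm{p}}\big[(\mathbf{p}_1)^{-1}\big]$, and fix the scale by enforcing the first constraint with equality, $\frac{1}{2}\mathbf{a}^{*}[\mathbf{p}_1]\mathbf{a}^{*T}=\epsilon_1$, which produces \eqref{eq:optalpha_const1}. The admissibility condition is that the inactive second constraint is strictly satisfied; substituting \eqref{eq:optalpha_const1} into $\frac{1}{2}\mathbf{a}^{*}[\mathbf{p}_2]\mathbf{a}^{*T}\le\epsilon_2$ and using $\big[(\mathbf{p}_1)^{-1}\big][\mathbf{p}_2]\big[(\mathbf{p}_1)^{-1}\big]=\big[\frac{\mathbf{p}_2}{(\mathbf{p}_1)^2}\big]$ collapses exactly to the ratio test \eqref{eq:const1_active}. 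Case~2 is identical with the roles of $\mathbf{p}_1,\mathbf{p}_2$ and $\epsilon_1,\epsilon_2$ interchanged, giving \eqref{eq:optalpha_const2} and \eqref{eq:const2_active}. In Case~3 both $\eta_1,\eta_2>0$ and both constraints hold with equality; substituting the stationarity expression for $\mathbf{a}^{*}$ into the two equalities yields the coupled system \eqref{eq:eta1}--\eqref{eq:eta2} for $(\eta_1^{*},\eta_2^{*})$, the prefactor $\lambda_{\mathrm{p}}/2$ in \eqref{eq:optimalalpha} arising from tracking the scaling between the squared objective (whose multipliers the $\eta_k^{*}$ are) and its linearized surrogate. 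Finally, I would map $\mathbf{a}^{*}$ back to a mechanism through Theorem~\ref{Theorem:BHT_approximation_simplification}: setting $\mathbf{A}^{*}=(\mathbf{a}^{*})^{T}\mathbf{v}$ and inverting $\mathbf{A}=\boldsymbol{\Theta}\big[(\mathbf{w}_0)^{-\frac{1}{2}}\big]$ gives $\boldsymbol{\Theta}=(\mathbf{a}^{*})^{T}\mathbf{v}\,\big[\sqrt{\mathbf{w}_0}\big]$ and hence $\mathbf{W}'=\mathbf{W}_0+\boldsymbol{\Theta}$, which is \eqref{eq:W_generation}.

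The main obstacle I anticipate is Case~3: unlike the single-active-constraint cases, which are solved in closed form, the system \eqref{eq:eta1}--\eqref{eq:eta2} is a coupled pair of rational equations in $(\eta_1^{*},\eta_2^{*})$, and one must argue that it admits a solution with both multipliers strictly positive precisely in the regime complementary to \eqref{eq:const1_active} and \eqref{eq:const2_active}. A secondary care point is the standing assumption $\mathbf{p}_k>0$, needed both for the inverses $\big[(\mathbf{p}_k)^{-1}\big]$ to exist and for the ellipsoidal constraint sets to be bounded (so the maximum is attained); distributions with zero entries would require a separate limiting argument.
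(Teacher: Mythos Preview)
Your proposal is correct and follows essentially the same route as the paper: reduce \eqref{eq:BHT_approx_2} to a convex program by exploiting evenness (the paper's Lemma~\ref{lemma_equivalentOS}), invoke Slater's condition, solve the KKT system case-by-case (the paper's Lemma~\ref{lamma_optimalalpha}), and then assemble $\mathbf{W}'$ via Theorem~\ref{Theorem:BHT_approximation_simplification}. One small bookkeeping remark: the $\eta_k^*$ in the statement are the multipliers for the linear surrogate $\frac{1}{2}\lambda_{\mathrm p}\,\mathbf{a}\mathbf{v}_{\mathrm p}^{T}$ (not for the squared objective), so the prefactor $\lambda_{\mathrm p}/2$ in \eqref{eq:optimalalpha} drops out directly from stationarity of that Lagrangian rather than from a rescaling between the two problems; and, as you correctly note, the paper does not establish existence of a positive solution to \eqref{eq:eta1}--\eqref{eq:eta2} either, so your caveat about Case~3 is well placed but not a gap relative to the original.
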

The proof of Theorem \ref{Theorem:BHT_optimalsol_HPapprox} involves proving two lemmas and is developed in Appendix \ref{proof:Theorem_BHT_optimalsol_HPapprox}.
\begin{remark}	
	\begin{figure}
		\centering
		\includegraphics[width = \columnwidth]{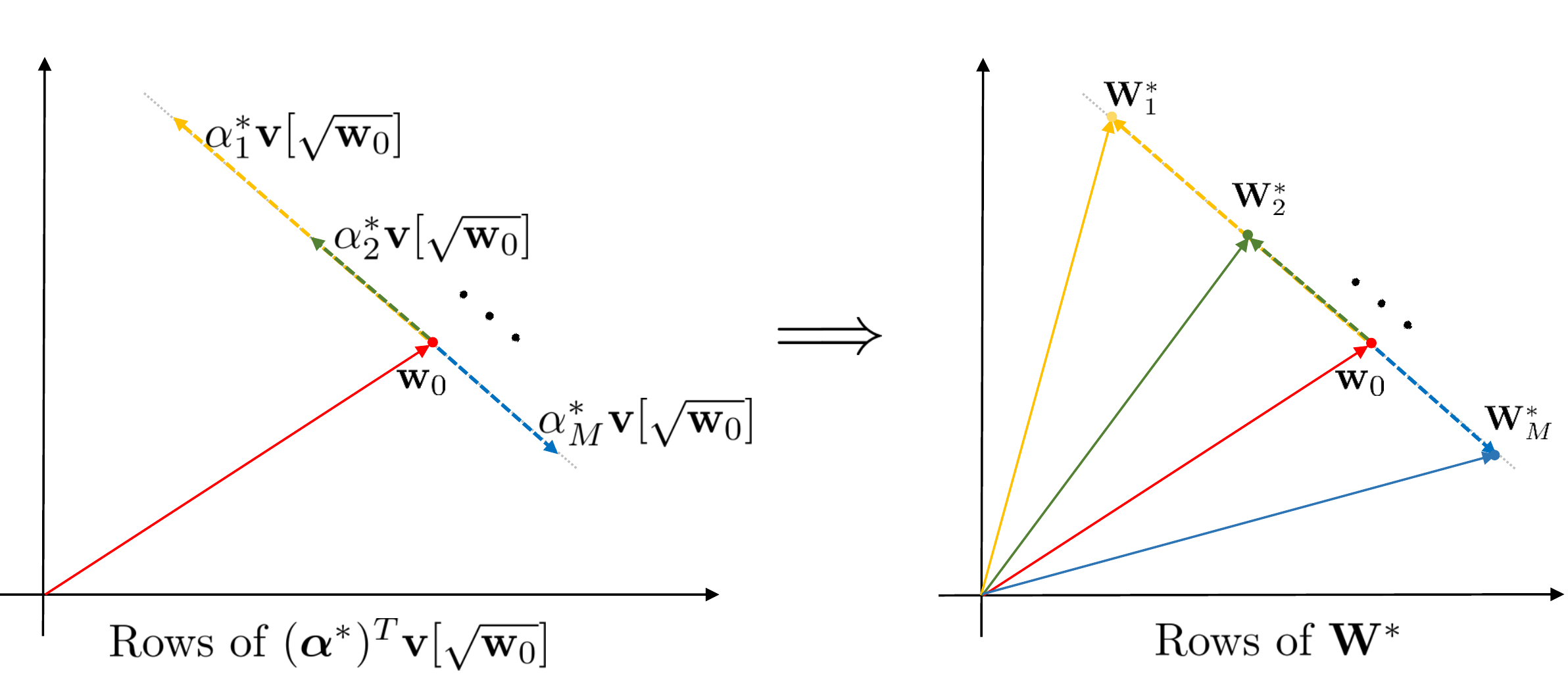}
		\caption{Illustration of  \eqref{eq:W_generation}. The ``nominal'' row vector $\mathbf{w}_0$ is perturbed in different directions depending on $\mathbf{a}^*$ and $\mathbf{v}$.}
		\label{fig:interp}
	\end{figure}
The optimal mechanism $\mathbf{W}'=\mathbf{W}_0+(\mathbf{a}^*)^T\mathbf{v}\big[\sqrt{\mathbf{w}_0}\big]$  captures the fact that a statistical privacy metric, such as MI, takes into consideration the source distribution in designing the perturbation mechanism $\boldsymbol{\Theta}^*$. In fact, the solutions for $\mathbf{a}^*$ in \eqref{eq:optalpha_const1}, \eqref{eq:optalpha_const2} and \eqref{eq:optimalalpha} quantify this through the term $\big(\eta_1^*[\mathbf{p}_1]+\eta_2^*[\mathbf{p}_2]\big)^{-1}$. The vector $\mathbf{v}_{\mathrm{p}}$ indicates the direction along which the objective function, i.e., the relative entropy, grows the fastest. In Fig.~\ref{fig:interp} we illustrate the results of Theorem \ref{Theorem:BHT_optimalsol_HPapprox}. Thus,
for  a uniformly distributed source, all entries of $\mathbf{v}_{\mathrm{p}}$ have the same scaling such that $\mathbf{a}^*$ is in the direction of $\mathbf{v}_{\mathrm{p}}$. However, for a non-uniform source, the samples with low probabilities affect the direction of $\mathbf{v}_{\mathrm{p}}\big(\eta_1^*[\mathbf{p}_1]+\eta_2^*[\mathbf{p}_2]\big)^{-1}$ the most. This is a   consequence of the statistical leakage metric  (the MI) which causes the optimal mechanism to minimize information leakage by perturbing the low probability (more informative) symbols proportionately more relative to the higher probability ones. 
\end{remark}

\subsection{Binary Hypothesis Testing (R\'enyi Divergence Setting)}\label{subsection:BHT_renyi}
We now consider the scenario in which both the false alarm and missed detection probabilities for $H_1$ are exponentially decreasing. For this case, the trade-off between the two error probabilities is captured by the R\'enyi divergence as shown in \cite{ EEHT_Tuncel, HypothesisTesting_Origin_Blahut}. We use this as our utility metric and briefly review the results in \cite{ EEHT_Tuncel, HypothesisTesting_Origin_Blahut} as a starting point.

Assume that the false alarm probability decays as $\exp(-nE_{2,1})$, for some exponent $E_{2,1}>0$. Then, the largest error exponent of the missed detection $E_{1,2}$ for a fixed $E_{2,1}$ is a function of $E_{2,1}$ given by \cite{HypothesisTesting_Origin_Blahut}   
\begin{align}\label{eq:BHT_NZEE1}
E_{1,2}(E_{2,1}) \triangleq \min_{\bp: D(\bp \| \bp_2) \le E_{2,1}}D( \bp\|\bp_1).
\end{align}
Since \eqref{eq:BHT_NZEE1} is a convex program, it
can be equivalently characterized by the Lagrangian minimization
\begin{align}
L(\beta)\triangleq\min_{\substack{\bp}}\{ D(\bp\|\bp_1)+ \beta D(\bp\|\bp_2)\} \label{eqn:Lb},
\end{align}
leading to the dual problem ~\cite{HypothesisTesting_Origin_Blahut}  
\begin{equation}
\label{eq:Renyi_dualmaximization}
E_{1,2}(E_{2,1}) = \sup_{\beta\geq 0} \left\{ L(\beta) - \beta E_{2,1} \right\}.
\end{equation}
The $\bp^*$ optimizing \eqref{eqn:Lb} can be computed using the KKT conditions of \eqref{eq:BHT_NZEE1} (cf. \cite[(15)]{EEHT_Tuncel}) to further obtain
\begin{align}\label{eq:lagrangian}
L(\beta)=-(1+\beta)\log\Big(\sum_{x}p_1(x)^{\frac{1}{1+\beta}}p_2(x)^{\frac{\beta}{1+\beta}}\Big).
\end{align}
For $\alpha\triangleq\frac{\beta}{1+\beta}\in (0,1)$, \eqref{eq:lagrangian} simplifies as \cite{EEHT_Tuncel}
\begin{eqnarray}
\frac{1}{\alpha\!-\!  1}\log\Big(\sum_{x}p_2(x)^{\alpha}p_1(x)^{1-\alpha}\Big)\!=\!\Dalpha(\bp_2\|\bp_1)
\end{eqnarray}
where $\Dalpha(\bp_2\|\bp_1)$ is the order-$\alpha$  R\'{e}nyi divergence. From \eqref{eqn:Lb} and \eqref{eq:Renyi_dualmaximization}, we see that $\Dalpha(\bp_2\|\bp_1)$ is the weighted sum of the two error exponents, i.e., $E_{1,2}(E_{2,1})+\frac{\alpha}{1-\alpha} E_{2,1}$, and as such a good candidate for a utility metric in this setting. For this metric, one can write the PUT problem as
\begin{equation}\label{eq:BHT_nonzeroEE}
\begin{aligned}
\max_{\substack{\mathbf{W}\in\mathcal{W}}}\quad & \Dalpha(\bp_2\mathbf{W}\|\bp_1\mathbf{W})\\		
\mathrm{s.t.} \quad & I(\mathbf{p}_k,\mathbf{W})\leq \epsilon_k, \quad k=1,2.\\
\end{aligned}
\end{equation}
Analogous to the PUT in \eqref{eq:OrigProHP} with relative entropy as the utility metric, the optimization in \eqref{eq:BHT_nonzeroEE} is non-convex and NP-hard. Thus, we focus on the high privacy regime and approximate the order-$\alpha$ R\'{e}nyi divergence in that regime. To this end, we use the following lemma to explicitly present the relationship of the order-$\alpha$ R\'{e}nyi divergence $D_{\alpha}(\bp_2\|\bp_1)$ and the relative entropy $D(\bp_2\|\bp_1)$ when $\bp_2$ and $\bp_1$ are ``close''.

\begin{lemma}\label{lemma:Approximation_Renyidiv}
For $\alpha\in(0,1)$, the following continuity statement holds: If \footnote{We say that a vector $\bp_2$ converges  to another vector $\bp_1$, denoted as $\bp_2\to\bp_1$, if $\|\bp_2-\bp_1\|_\infty\to 0$. } $\bp_2\to\bp_1$, then
\begin{equation}
\frac{(1-\alpha) D(\bp_2\|\bp_1)}{  {2^{(1-\alpha ) D_{\alpha}(\bp_2\|\bp_1) }-1}   }\to \frac{\log e}{\alpha}.  \label{eq:approx_RenyiDiverKLDiver}
\end{equation}
\end{lemma}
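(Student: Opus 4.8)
The plan is to reduce both divergences to a single common infinitesimal, namely the $\chi^2$-divergence $\chi^2(\bp_2\|\bp_1)=\sum_i (p_{2i}-p_{1i})^2/p_{1i}$, which governs the quadratic behaviour of every smooth $f$-divergence as $\bp_2\to\bp_1$. Writing $\bp_2=\bp_1+\boldsymbol{\delta}$ with $\sum_i \delta_i=0$ and setting $u_i\triangleq \delta_i/p_{1i}\to 0$, I would first record that, because $\bp_1$ has full support so that each $p_{1i}$ is bounded away from $0$, one has $\chi^2(\bp_2\|\bp_1)=\sum_i p_{1i}u_i^2\asymp\|\boldsymbol{\delta}\|^2$. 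Hence any quantity that is $o(\|\boldsymbol{\delta}\|^2)$ is also $o(\chi^2(\bp_2\|\bp_1))$, which makes $\chi^2$ the right yardstick against which to measure all the terms that appear.

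Next I would Taylor-expand each divergence to second order. For the relative entropy (base $2$), substituting $p_{2i}=p_{1i}(1+u_i)$ and using $(1+u)\ln(1+u)=u+u^2/2+o(u^2)$ together with $\sum_i\delta_i=0$ gives $D(\bp_2\|\bp_1)=\tfrac{\log e}{2}\,\chi^2(\bp_2\|\bp_1)+o(\chi^2)$. For the R\'enyi divergence I would expand the power sum from its definition: $\sum_i p_{2i}^{\alpha}p_{1i}^{1-\alpha}=\sum_i p_{1i}(1+u_i)^{\alpha}=1+\tfrac{\alpha(\alpha-1)}{2}\,\chi^2(\bp_2\|\bp_1)+o(\chi^2)$, the first-order term vanishing again by $\sum_i\delta_i=0$. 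Applying $\frac{1}{\alpha-1}\log(\cdot)$ and $\log(1+s)=(\log e)s+o(s)$ then yields $\Dalpha(\bp_2\|\bp_1)=\tfrac{\alpha\log e}{2}\,\chi^2(\bp_2\|\bp_1)+o(\chi^2)$. The punchline of this step is that the two divergences are the \emph{same} multiple of $\chi^2$ up to the factor $\alpha$, so that $D(\bp_2\|\bp_1)/\Dalpha(\bp_2\|\bp_1)\to 1/\alpha$.

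Finally I would treat the denominator $2^{(1-\alpha)\Dalpha(\bp_2\|\bp_1)}-1$. Since $(1-\alpha)\Dalpha(\bp_2\|\bp_1)\to 0$, linearising $2^{x}-1=(\ln 2)x+o(x)$ at $x=(1-\alpha)\Dalpha(\bp_2\|\bp_1)$ gives $2^{(1-\alpha)\Dalpha(\bp_2\|\bp_1)}-1=(\ln 2)(1-\alpha)\Dalpha(\bp_2\|\bp_1)\,(1+o(1))$. Dividing, the factors $(1-\alpha)$ cancel and
\[
\frac{(1-\alpha)D(\bp_2\|\bp_1)}{2^{(1-\alpha)\Dalpha(\bp_2\|\bp_1)}-1}
=\frac{1}{\ln 2}\cdot\frac{D(\bp_2\|\bp_1)}{\Dalpha(\bp_2\|\bp_1)}\,(1+o(1))
\to \frac{1}{\alpha\ln 2}=\frac{\log e}{\alpha},
\]
which is the claim. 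The main obstacle is bookkeeping of the error terms rather than any conceptual difficulty: since I am dividing two quantities that both vanish, I must be sure that the $o(\cdot)$ remainders in both second-order expansions are genuinely of lower order than $\chi^2$ before forming the ratio. This is precisely where the full-support hypothesis on $\bp_1$ is essential, guaranteeing both $\chi^2(\bp_2\|\bp_1)\asymp\|\boldsymbol{\delta}\|^2$ and the smoothness of $x\mapsto x^{\alpha}$ and $x\mapsto x\log x$ at the (strictly positive) entries of $\bp_1$; without it the ratio of two vanishing divergences need not be well behaved.
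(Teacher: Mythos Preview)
Your argument is correct: the second-order Taylor expansions of $D(\bp_2\|\bp_1)$ and $\Dalpha(\bp_2\|\bp_1)$ about $\bp_2=\bp_1$ both reduce to multiples of $\chi^2(\bp_2\|\bp_1)$, in ratio $1{:}\alpha$, and linearising $2^x-1$ then yields the stated limit. The bookkeeping you flag (uniform control of the $o(\chi^2)$ remainders via full support of $\bp_1$) is exactly the point that makes the ratio of two vanishing quantities well defined, and your treatment of it is adequate.

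The paper proceeds quite differently. Rather than expanding both divergences against $\chi^2$, it invokes a two-sided sandwich inequality of Sason for the ratio $D(\bp_2\|\bp_1)/\mathcal{H}_\alpha(\bp_2\|\bp_1)$ between KL and the order-$\alpha$ Hellinger divergence, with bounds $\kappa_\alpha(\beta_2)$ and $\kappa_\alpha(\beta_1^{-1})$ depending on the likelihood-ratio extremes $\beta_1,\beta_2$; as $\bp_2\to\bp_1$ these extremes tend to $1$ and continuity of $\kappa_\alpha$ forces the ratio to $\alpha^{-1}\log e$. The exact identity $D_\alpha=\frac{1}{\alpha-1}\log\bigl(1+(\alpha-1)\mathcal{H}_\alpha\bigr)$ then converts this into the claim. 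Your route is more elementary and fully self-contained, needing only calculus and the common $\chi^2$ local model that the paper already uses elsewhere; the paper's route avoids any explicit expansion but imports a nontrivial external inequality. Both hinge on the same structural fact---that all smooth $f$-divergences share the $\chi^2$ leading term---your proof makes this explicit, while the paper's hides it inside the cited bound.
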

The proof is detailed in Appendix \ref{proof:lemma_Approximation_Renyidiv}.

According to Proposition \ref{Proposition:approximation_MIandRE}, any privacy mechanism $\bW$ in the high privacy regime is a perturbation of a perfect privacy mechanism $\bW_0$. When $\epsilon_k$ in \eqref{eq:BHT_nonzeroEE} is close to zero, $\rho$ is also close to $0$ and both output distributions $\bp_1\bW$ and $\bp_2\bW$ approach $\bw_0$. We now use Lemma \ref{lemma:Approximation_Renyidiv} in the following corollary to show that the ratio of $(1-\alpha) D(\bp_2\mathbf{W}\|\bp_1\mathbf{W})$ and $2^{(1-\alpha ) D_{\alpha}(\bp_2\mathbf{W}\|\bp_1\mathbf{W}) }-1$ converges to the constant $\alpha^{-1} \log e$.

\begin{corollary}\label{corollary:Renyidiv_KL_BHTNZEE}
	Let  $\alpha\in(0,1)$.  In \eqref{eq:BHT_nonzeroEE},  if $\epsilon_1,\epsilon_2\to 0$, $\bW$ converges to a perfect privacy mechanism $\bW_0$ (cf.\  Lemma \ref{lemma:perfect_privacy}). Consequently, $\bp_2\mathbf{W}\to\bp_1\mathbf{W}$ and the following convergence statement also holds. 
	\begin{align}\label{eq:Renyidiv_KL_BHTNZEE}
	\frac{(1-\alpha) D(\bp_2\mathbf{W}\|\bp_1\mathbf{W})}{  {2^{(1-\alpha ) D_{\alpha}(\bp_2\mathbf{W}\|\bp_1\mathbf{W}) }-1}   }\to \frac{\log e}{\alpha}.
	\end{align}
\end{corollary}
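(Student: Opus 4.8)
The plan is to obtain Corollary~\ref{corollary:Renyidiv_KL_BHTNZEE} as a direct consequence of Lemma~\ref{lemma:Approximation_Renyidiv}, by verifying that the pair of distributions appearing in \eqref{eq:Renyidiv_KL_BHTNZEE}, namely the output distributions $\bp_2\bW$ and $\bp_1\bW$, satisfy the single hypothesis of that lemma---that they become arbitrarily close in the $\ell_\infty$ sense. Everything reduces to tracking how the feasibility constraints of \eqref{eq:BHT_nonzeroEE} control $\bW$ as the leakage budgets vanish.

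First I would show that $\bW$ converges to a perfect-privacy mechanism $\bW_0$ as $\epsilon_1,\epsilon_2\to 0$. Feasibility forces $I(\bp_k,\bW)\le\epsilon_k\to 0$, hence $I(\bp_k,\bW)\to 0$ for $k=1,2$. Since $\bW\mapsto I(\bp_k,\bW)$ is continuous on the compact set $\mathcal{W}$ and, by Lemma~\ref{lemma:perfect_privacy}, vanishes exactly on the mechanisms whose rows coincide on the support of $\bp_k$, every limit point of the sequence $\bW$ must have identical rows on $\mathrm{supp}(\bp_k)$. Imposing both constraints $k=1,2$ simultaneously (their supports covering $\mathcal{X}$) forces all rows to approach a common interior row vector $\bw_0$, i.e., $\bW\to\bW_0$ with $\bW_0$ the rank-$1$ mechanism of Lemma~\ref{lemma:perfect_privacy}.

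Next I would convert this into closeness of the two output distributions. Because $\bW_0$ has every row equal to $\bw_0$ and $\bp_2-\bp_1$ sums to zero, we have $(\bp_2-\bp_1)\bW_0=\big(\sum_i(p_{2i}-p_{1i})\big)\bw_0=\mathbf{0}$. Consequently $(\bp_2-\bp_1)\bW=\bp_2\bW-\bp_1\bW\to\mathbf{0}$, so that $\|\bp_2\bW-\bp_1\bW\|_\infty\to 0$; by the convention in the footnote of Lemma~\ref{lemma:Approximation_Renyidiv}, this is precisely the statement $\bp_2\bW\to\bp_1\bW$. Applying Lemma~\ref{lemma:Approximation_Renyidiv} with $\bp_1\bW$ and $\bp_2\bW$ substituted for $\bp_1$ and $\bp_2$ then yields \eqref{eq:Renyidiv_KL_BHTNZEE}.

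The main obstacle is the legitimacy of this final substitution. Lemma~\ref{lemma:Approximation_Renyidiv} is phrased for a \emph{fixed} reference $\bp_1$ with $\bp_2\to\bp_1$, whereas here \emph{both} arguments $\bp_1\bW$ and $\bp_2\bW$ drift, converging together to the common interior point $\bw_0$. I would resolve this by observing that the limiting constant $\tfrac{\log e}{\alpha}$ in \eqref{eq:approx_RenyiDiverKLDiver} is independent of the reference distribution, and that the underlying second-order ($\chi^2$) expansions of both $D(\cdot\|\cdot)$ and $\Dalpha(\cdot\|\cdot)$ hold uniformly over reference points confined to any compact subset of the simplex interior. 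Since $\bw_0$ is interior ($w_{0j}>0$) and both output distributions eventually lie in a shrinking neighborhood of $\bw_0$, the denominators in those expansions stay bounded away from zero; the shared leading $\chi^2$ term cancels in the ratio, giving the limit $\tfrac{\log e}{\alpha}$ regardless of the simultaneous drift of the reference. Making this uniformity explicit---rather than merely quoting the fixed-reference lemma---is the only delicate point, the remainder being continuity of mutual information and of matrix--vector multiplication.
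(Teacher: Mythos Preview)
Your proposal is correct and takes essentially the same approach as the paper, which (in the paragraph immediately preceding the corollary) simply notes that both $\bp_1\bW$ and $\bp_2\bW$ approach $\bw_0$ and then invokes Lemma~\ref{lemma:Approximation_Renyidiv} without further comment. You are actually more careful than the paper in flagging the drifting-reference subtlety; in fact the paper's own proof of Lemma~\ref{lemma:Approximation_Renyidiv} (via the Sason bound \eqref{eq:Inequality_RelativeVSHellinger}, whose endpoints $\kappa_\alpha(\beta_2)$ and $\kappa_\alpha(\beta_1^{-1})$ depend only on the entrywise ratios of the two distributions) already covers the case where both arguments drift to a common interior point, so no separate uniformity argument is strictly required.
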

From \eqref{eq:Renyidiv_KL_BHTNZEE}, we observe that as $\epsilon_1,\epsilon_2\to 0$, 
$D_\alpha(\bp_2W\|\bp_1W)$ is monotonically increasing in $D(\bp_2W\|\bp_1W)$. 
Thus, in this high privacy regime,  the optimizer of $D_{\alpha}(\bp_2W\|\bp_1W)$ is the same as $D(\bp_2W\|\bp_1W)$. As a result, in the high privacy regime we revert to the relative entropy setting, for which we provide a closed-form solution in Theorem \ref{Theorem:BHT_optimalsol_HPapprox}.

\section{$m$-ary Hypothesis Testing in the High Privacy Regime}\label{section:MHT}
We now consider the $m$-ary hypothesis testing problem with $m$ distinct hypotheses $H_k$, $k\in\{1,\ldots,m\}$, each corresponding to a distribution $\bp_k$. This in turn results in $m(m-1)$ error probabilities of incorrectly inferring hypothesis $H_i$ as hypothesis $H_j$. As stated in Section~\ref{section:Problem_Formulation}, to simplify our analysis, we consider a scenario somewhat analogous to the ``red alert''~\cite{RedAlert_Nazer} problem in ``unequal error protection'' \cite{UnequalErrorProtection_Borade}, i.e., there is one distinct hypothesis $H_1$, the inference of which is more crucial than that of others (e.g., presence of cancer). We focus on maximizing the minimum of the $m-1$ error exponents corresponding to the $m-1$ ways of incorrectly deciding $H_1$ as $H_j , j \ne 1$. 

For this problem of unequal $m$-ary hypothesis testing, we introduce the PUT in \eqref{eq:MHT_original}. We can further simplify the trade-off in the high privacy regime using Proposition \ref{Proposition:approximation_MIandRE} to obtain the following PUT:
\begin{subequations}\label{eq:MHT_approximation}
	\begin{align}
	\label{eq:MHT_approximation_obj}
		\max_{\substack{\bA}} \quad & \min  \Big\{\frac{1}{2}\|(\bp_k-\bp_1)\bA\|^2, k=2,\ldots,m\Big\}\\
		\label{eq:MHT_approximation_privacy}
		\mathrm{s.t.} \quad &\frac{1}{2}\sum_{i=1}^{M}p_{ki}\|\bA_i\|^2\leq \epsilon_k \quad k=1,2,\ldots,m\\
		\label{eq:MHT_approximation_A}
		& \bA(\sqrt{\bw_0})^T=\mathbf{0}.
	\end{align}
\end{subequations}
Recall that $\bA\in\mathbb{R}^{M\times N}$ is a perturbation matrix such that the privacy mechanism $\bW$ is related to $\bW_0$ as $\bW=\bW_0+\bA[\sqrt{\bw_0}]$, and $\bA_i$ is the $i^{\text{th}}$ row of $\bA$.\\

For ease of analysis, we start from a simplified version of \eqref{eq:MHT_approximation} without the constraint \eqref{eq:MHT_approximation_A}, which can be transformed to a semi-definite program (SDP) as summarized in the following lemma. Based on an optimal solution of the SDP, a scheme is proposed for constructing an optimal solution $\bA^*$ of \eqref{eq:MHT_approximation} satisfying \eqref{eq:MHT_approximation_A}.

\begin{lemma}\label{Lemma:MHT_approximation_simplerSDP}
The optimization in \eqref{eq:MHT_approximation_obj} with constraint \eqref{eq:MHT_approximation_privacy} is equivalent to an SDP  with ($M\times M$ matrix) variable $\bB=\bA\bA^T$ given as
	\begin{align}\label{eq:MHT_approximation_simplerSDP}
			\max_{\substack{\bB,t}} \quad &  t \nonumber\\
			\mathrm{s.t.} \quad &\frac{1}{2}\Tr\big((\bp_k-\bp_1)^T(\bp_k-\bp_1)\bB\big) \geq t\quad k=2,\ldots,m\nonumber\\
			&\frac{1}{2}\Tr\big([\bp_k]\bB\big) \leq \epsilon_k  \quad k=1,2,\ldots,m\nonumber\\
			&\bB\succeq 0
	\end{align}
	where $[\bp_k]$ is a diagonal matrix with $i^{\text{th}}$ diagonal entry equal to $p_{ki}$, and $\Tr[[\bp_k]\bB]$ is the trace  of the matrix $[\bp_k]\bB$.
\end{lemma}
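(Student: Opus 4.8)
The plan is to show that both the objective and the constraints of \eqref{eq:MHT_approximation_obj}--\eqref{eq:MHT_approximation_privacy} depend on the perturbation matrix $\bA$ only through the Gram matrix $\bB=\bA\bA^T$, and that the feasible sets are in exact correspondence under this substitution. First I would rewrite the objective terms. Since $(\bp_k-\bp_1)\bA$ is a row vector, its squared norm is $\|(\bp_k-\bp_1)\bA\|^2=(\bp_k-\bp_1)\bA\bA^T(\bp_k-\bp_1)^T$, and applying the cyclic property of the trace gives $\|(\bp_k-\bp_1)\bA\|^2=\Tr\big((\bp_k-\bp_1)^T(\bp_k-\bp_1)\bB\big)$, which is exactly the left-hand side of the first family of SDP constraints.

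Next I would rewrite the privacy constraints. Writing $\bA_i$ for the $i$-th row of $\bA$, one has $\|\bA_i\|^2=(\bA\bA^T)_{ii}=B_{ii}$, so $\sum_{i=1}^M p_{ki}\|\bA_i\|^2=\sum_{i=1}^M p_{ki}B_{ii}=\Tr([\bp_k]\bB)$, matching the second family of SDP constraints. With the objective and constraints both expressed in $\bB$, I would put the maximin objective $\max_\bA\min_k\frac12\|(\bp_k-\bp_1)\bA\|^2$ into epigraph form by introducing the scalar $t$: maximizing $t$ subject to $\frac12\Tr\big((\bp_k-\bp_1)^T(\bp_k-\bp_1)\bB\big)\geq t$ for each $k$ is equivalent to maximizing the minimum over $k$.

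The crux is to argue that optimizing over $\bA$ and optimizing over $\bB$ yield the same value. In the forward direction, any feasible $\bA$ produces $\bB=\bA\bA^T\succeq 0$ that is feasible for the SDP with the same objective value, so the SDP optimum is at least the original optimum. For the converse I would invoke the standard fact that every positive semidefinite $M\times M$ matrix admits a factorization $\bB=\bA\bA^T$ --- for instance, via the eigendecomposition $\bB=\bU\bLambda\bU^T$ and taking $\bA=\bU\bLambda^{1/2}$. This $\bA$ reproduces the SDP objective and constraint values, so any feasible $\bB$ lifts to a feasible $\bA$, and the original optimum is at least the SDP optimum. The two optima therefore coincide, and an optimal $\bA^*$ is recovered from any optimal $\bB^*$ by such a factorization.

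The hard part will be the dimensional bookkeeping in this factorization: the recovered $\bA$ has as many columns as the rank of the chosen square root, at most $M$, whereas the perturbation matrix nominally has $N$ columns. Because this simplified problem has dropped the row-stochastic constraint \eqref{eq:MHT_approximation_A} --- the only constraint coupling $\bA$ to the output-alphabet size --- the number of columns is unconstrained here, and one may pad $\bA$ with zero columns (or simply take $N=M$) without changing any objective or constraint value. This freedom is precisely what makes the $\bB=\bA\bA^T$ substitution an \emph{exact} reformulation rather than a relaxation; reinstating \eqref{eq:MHT_approximation_A} is handled separately by the construction built on top of this SDP.
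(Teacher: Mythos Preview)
Your proposal is correct and follows essentially the same approach as the paper: rewrite both the objective terms and the privacy constraints in terms of $\bB=\bA\bA^T$ via the trace identities, then introduce the epigraph variable $t$ to linearize the maximin objective. If anything, your argument is slightly more complete than the paper's, since you explicitly treat the converse direction (factoring an arbitrary $\bB\succeq 0$ as $\bA\bA^T$) and the column-dimension bookkeeping, whereas the paper's proof takes the equivalence for granted after the trace rewriting and focuses on casting the result in standard SDP form.
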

Lemma \ref{Lemma:MHT_approximation_simplerSDP} stems from the observation that both the objective function \eqref{eq:MHT_approximation_obj} and constraints \eqref{eq:MHT_approximation_privacy} are linear functions of the entries of the positive semidefinite matrix $\bA\bA^T$. The proof for Lemma \ref{Lemma:MHT_approximation_simplerSDP} is provided in Appendix \ref{proof:lemma_MHT_approximation_simplerSDP}. The following theorem shows that the solution of the SDP in \eqref{eq:MHT_approximation_simplerSDP} yields an optimal privacy mechanism for the approximated PUT in \eqref{eq:MHT_approximation}.
\begin{theorem}\label{Theorem: MHT_EITappr_optimalW}
	An optimal \text{privacy mechanism} $\bW'$ for the optimization problem in \eqref{eq:MHT_approximation} is 
	\begin{align}
	\label{eq:MHTrecovered_W}
	\bW'=\bW_0+\bA^*[\sqrt{\bw_0}]
	\end{align}
	where $\bW_0$ is the perfect privacy mechanism with   rows $\bw_0$, $\bA^*$ is an optimal solution of~\eqref{eq:MHT_approximation} obtained from an optimal solution $\bB^*=\bU^*[\blambda](\bU^*)^T$ with $l\triangleq \mathrm{rank}(\bB^*)\leq M$ of the SDP in~\eqref{eq:MHT_approximation_simplerSDP}. It suffices to restrict the output alphabet size $N$ to $l+1$, such that
	\begin{align}
		\label{eq:MHTrecovered_A}
		\bA^*=\bU^*\bSigma^*\bV^T,
	\end{align}
		where $\bSigma^*$ is an $M\times (l+1)$ rectangular diagonal matrix whose diagonal entries are the square roots of the $l$ non-zero eigenvalues $\blambda$ of $\bB^*$, $\bU^*$ is a unitary matrix consisting of the eigenvectors of $\bB^*$, and $\bV$ is an $(l+1)\times (l+1)$  unitary matrix whose the first $l$ columns are orthogonal to $\sqrt{\bw_0}$.
\end{theorem}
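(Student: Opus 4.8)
The plan is to show that the semidefinite relaxation of Lemma~\ref{Lemma:MHT_approximation_simplerSDP} is tight for the full problem \eqref{eq:MHT_approximation}, and then to exhibit an explicit factorization that reinstates the dropped stochasticity constraint \eqref{eq:MHT_approximation_A} at no cost. First I would note that both the objective \eqref{eq:MHT_approximation_obj} and the privacy constraints \eqref{eq:MHT_approximation_privacy} depend on $\bA$ only through the Gram matrix $\bB=\bA\bA^T$, since $\|(\bp_k-\bp_1)\bA\|^2=\Tr\big((\bp_k-\bp_1)^T(\bp_k-\bp_1)\bB\big)$ and $\sum_i p_{ki}\|\bA_i\|^2=\Tr\big([\bp_k]\bB\big)$. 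By Lemma~\ref{Lemma:MHT_approximation_simplerSDP}, the optimal value of the SDP \eqref{eq:MHT_approximation_simplerSDP} thus equals the optimal value of \eqref{eq:MHT_approximation} with \eqref{eq:MHT_approximation_A} removed, which is an upper bound on the optimal value of the full problem. The crux is therefore to produce, from an SDP optimizer $\bB^*$, a factor $\bA^*$ with $\bA^*(\bA^*)^T=\bB^*$ (preserving objective and privacy values) that additionally satisfies $\bA^*(\sqrt{\bw_0})^T=\mathbf{0}$; this certifies that the two optimal values coincide and that $\bA^*$ is optimal for \eqref{eq:MHT_approximation}.

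Second, I would build $\bA^*$ from the eigendecomposition $\bB^*=\bU^*[\blambda](\bU^*)^T$ with $l=\mathrm{rank}(\bB^*)$. Setting $\bA^*=\bU^*\bSigma^*\bV^T$, where $\bSigma^*$ is the $M\times(l+1)$ rectangular-diagonal matrix carrying $\sqrt{\lambda_1},\ldots,\sqrt{\lambda_l}$, a direct computation gives $\bSigma^*(\bSigma^*)^T=[\blambda]$, so $\bA^*(\bA^*)^T=\bB^*$ for \emph{every} orthogonal $\bV$. The residual freedom in $\bV$ is precisely what enforces \eqref{eq:MHT_approximation_A}: since $\bU^*$ is invertible and $\bSigma^*$ annihilates exactly the last coordinate, $\bA^*(\sqrt{\bw_0})^T=\mathbf{0}$ holds if and only if the first $l$ entries of $\bV^T(\sqrt{\bw_0})^T$ vanish, i.e.\ if and only if the first $l$ columns of $\bV$ are orthogonal to $\sqrt{\bw_0}$. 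Because $\bw_0$ lies in the probability simplex, $\|\sqrt{\bw_0}\|=1$, so I would take the last column of $\bV$ equal to $\sqrt{\bw_0}$ and its first $l$ columns to be any orthonormal basis of the $l$-dimensional orthogonal complement of $\sqrt{\bw_0}$ in $\mathbb{R}^{l+1}$, yielding a genuine orthogonal $\bV$ with the required structure.

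The step I expect to be the main obstacle is pinning down the output-alphabet size and arguing that $N=l+1$ is both necessary and sufficient; this is a dimension count. Any factor of $\bB^*$ has rank $l$, so its right null space has dimension $N-l$, and accommodating the nonzero vector $\sqrt{\bw_0}$ inside that null space forces $N\ge l+1$; conversely $N=l+1$ makes the null space one-dimensional, and the construction above aligns it exactly with $\mathrm{span}(\sqrt{\bw_0})$. Finally, I would verify that $\bW'=\bW_0+\bA^*[\sqrt{\bw_0}]$ is a legitimate mechanism: each row sum equals $1$ because $\bA^*(\sqrt{\bw_0})^T=\mathbf{0}$ gives $\sum_j A^*_{ij}\sqrt{w_{0j}}=0$, and non-negativity of $W'_{ij}=\sqrt{w_{0j}}\big(\sqrt{w_{0j}}+A^*_{ij}\big)$ follows in the high-privacy regime directly from the perturbation bound $|\Theta_{ij}|\le\rho w_{0j}$ of Proposition~\ref{Proposition:approximation_MIandRE}, so no argument beyond the regime assumption is needed there.
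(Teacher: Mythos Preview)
Your proposal is correct and follows essentially the same approach as the paper: eigendecompose the SDP optimizer $\bB^*$, form $\bA^*=\bU^*\bSigma^*\bV^T$, and use the rotational freedom in $\bV$ to align the null direction of $\bSigma^*$ with $\sqrt{\bw_0}$, thereby reinstating constraint \eqref{eq:MHT_approximation_A} at no loss. The paper's proof carries out precisely this construction (writing out the block-matrix verification of $\bA^*(\sqrt{\bw_0})^T=\mathbf{0}$ explicitly); your additions---the dimension-count argument that $N\ge l+1$ is also \emph{necessary}, and the check that $\bW'$ is row-stochastic and entrywise nonnegative---go slightly beyond what the paper records but are correct and natural completions.
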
 
\begin{proof}
	Let $\bB^*$ be the optimal solution of the SDP in~\eqref{eq:MHT_approximation_simplerSDP} and $l\triangleq \mathrm{rank}(\bB^*)$. We decompose $\bB^*$ via an  eigenvalue decomposition as follows:
	\begin{align}
	\bB^*=\bU^*[\blambda](\bU^*)^T.
	\end{align}  
	Here, $[\blambda]$ is an $l\times l$ diagonal matrix consisting of entries in the eigenvalue vector $\blambda$ and the columns of the $M\times l$ matrix $\bU^*$ are the $l$ corresponding eigenvectors. Construct an $l\times (l+1)$ rectangular diagonal matrix $\bSigma^*$ by adding one all-zero column to $[\sqrt{\blambda}]$. Let $N=l+1$. By choosing a $(l+1)\times (l+1)$ unitary matrix $\bV$, whose last column parallel to the $(l+1)$-dimensional row vector $\sqrt{\bw_0}$, we design a matrix $\bA^*$ as $\bU^*\bSigma^*\bV^T$ such that $\bA^*(\bA^*)^T=\bU^*\bSigma^*(\bSigma^*)^T(\bU^*)^T=\bB^*$. From Lemma~\ref{Lemma:MHT_approximation_simplerSDP}, the SDP in \eqref{eq:MHT_approximation_simplerSDP} is equivalent to~ the simplified \eqref{eq:MHT_approximation} without \eqref{eq:MHT_approximation_A}. Therefore, $\bA^*$ optimizes~the simplified \eqref{eq:MHT_approximation}. 
	In addition, 
	\begin{align}
	& \bA^*(\sqrt{\bw_0})^T=\bU^*\bSigma^*\bV^T(\sqrt{\bw_0})^T \\
	\label{eq:MHT_VerifyconstructV}
	&=\bU^*\begin{bmatrix}
	\sqrt{\lambda_1}&& &0 \\
	&\ddots&  &\vdots \\
	&& \sqrt{\lambda_l}&0 
	\end{bmatrix}_{l\times (l+1)}\begin{bmatrix}
	0 \\
	\vdots\\
	0 \\
	\|\sqrt{\bw_0}\| \\
	\end{bmatrix}_{(l+1)}\\
	&=\bU^*\mathbf{0}=\mathbf{0}
	\end{align}
	where \eqref{eq:MHT_VerifyconstructV} follows from the fact that the last column of the $(l+1)\times (l+1)$ unitary matrix $\bV$ is parallel to $\sqrt{\bw_0}$, such that the first $l$ columns of $\bV$ are orthogonal to $\sqrt{\bw_0}$, and the inner product of its last column and $\sqrt{\bw_0}$ is the Euclidean norm of $\sqrt{\bw_0}$. 
	Therefore, the $\bA^*$ constructed above is feasible and attains the optimal value of \eqref{eq:MHT_approximation}.
\end{proof}
\begin{remark}
Note that the size of output alphabet is at most $M+1$. For the special case of binary hypothesis testing, we have shown in Theorem \ref{Theorem:BHT_approximation_simplification} that the rank of $B^*$ is $1$ and therefore, $N=2$.
\end{remark}
\begin{remark}
In the absence of any constraints in \eqref{eq:MHT_approximation}, analogous to the binary hypothesis test, one would choose $\min\{m-1,M-1\}$ columns of $\bU$ to span the space contained by the vectors $\bp_k-\bp_1$ for all $k\neq 1$. However, the constraints in \eqref{eq:MHT_approximation} depend explicitly on the vectors $\bp_k$, and in fact, in \eqref{eq:MHT_approximation_simplerSDP} at least one constraint will be tight at the optimal solution $\bB^*$. Thus, analogous to the binary hypothesis result, we expect the optimal mechanism to depend inversely on one or more $\bp_k$. We show that this is indeed the case for binary sources in the following subsection.
\end{remark}
\vspace{-0.5cm}

\subsection{$m$-ary Hypotheses Testing with Binary Sources}\label{subsection_MHT_BinarySources}
If all the $m$ distributions $\bp_k$ are Bernoulli, the $m-1$ difference vectors $\bp_k-\bp_1$ in  \eqref{eq:MHT_approximation} are {\em collinear}. 
Thus, the minimizing element in the objective is the one in which $\bp_k-\bp_1$ has the minimal Euclidean norm. Without loss of generality, assume $\|\bp_2-\bp_1\|=\min\{\|\bp_k-\bp_1\|,k=2,\ldots,m\}$. Therefore, $\|(\bp_2-\bp_1)\bA\|^2=\min\{\|(\bp_k-\bp_1)\bA\|^2,k=2,\ldots,m\}$. In this case, the E-IT approximation in~\eqref{eq:MHT_approximation} reduces to
\begin{subequations}\label{eq:BinarySource_approximation}
	\begin{align}
	\label{eq:BinarySource_approximation_obj}
		\max_{\substack{\bA}} \quad & \frac{1}{2}\|(\bp_2-\bp_1)\bA\|^2\\
		\label{eq:BinarySource_approximation_PrivacyCons}
		\mathrm{s.t.} \quad &\frac{1}{2}\sum_{i=1}^{2}p_{ki}\|\bA_i\|^2\leq \epsilon_k \quad \ k=1,\ldots,m\\
		\label{eq:BinarySource_approximation_A}
		& \bA(\sqrt{\bw_0})^T=\mathbf{0}.
	\end{align}
\end{subequations}

We notice that \eqref{eq:BinarySource_approximation} has the same form as \eqref{eq:BHT_approximation} (the E-IT approximation for binary hypothesis testing for the relative entropy setting), where the number of constraints in \eqref{eq:BinarySource_approximation_PrivacyCons} is $m\geq 2$. Specifically, the objective and constraints have the same structure as in \eqref{eq:BHT_approximation}, and thus, the results in Theorem \ref{Theorem:BHT_approximation_simplification} holds here. Therefore, from Theorem \ref{Theorem:BHT_optimalsol_HPapprox}, the corresponding optimal privacy mechanism can be expressed as \eqref{eq:W_generation} but with
\begin{align}
\label{eq:BinarySource_approximation_opta}
\ba^*=\frac{\bp_2-\bp_1}{2}\left(\left[\sum_{k=1}^{m}\eta^*_k\bp_k\right]\right)^{-1},
\end{align}
where $\eta^*_k\geq 0$, $k=1,\ldots,m$, are the dual variables for the $m$ constraints in \eqref{eq:BinarySource_approximation_PrivacyCons}.

 Note that for those $\eta_k^*$ that are non-zero,  the corresponding constraints in \eqref{eq:BinarySource_approximation_PrivacyCons} are tight, i.e., if $\eta^*_k>0$, $\frac{1}{2}\sum_{i=1}^{2}p_{ki}\|\bA_i\|^2=\epsilon_k$. Let $\mathcal{K}=\{k: \eta_k^*>0,\,k=1,\ldots,m\}$. Thus, the $\ba^*$ in \eqref{eq:BinarySource_approximation_opta} depends inversely on a linear combination of the distributions indexed by $\mathcal{K}$. Consequently, the optimal mechanism for the approximated PUT depends inversely on these distributions.

\section{Numerical Results}\label{section:illustration}
In this section, we numerically evaluate the utilities achieved by optimal privacy mechanisms for E-IT approximations in  two scenarios: $m=2$ (binary) and $m=3$ (ternary) hypothesis testing. Furthermore, for the binary hypothesis testing scenario, we consider both the relative entropy and R\'enyi divergence settings, while for the $m=3$ scenario, we only focus on the relative entropy setting. Our goal is to compare the maximal utility for the E-IT approximation with that achieved for the original PUT. To this end, we start by choosing a privacy leakage level $\epsilon_k=\tilde{\epsilon}\ll \min_{k}H(\bp_k)$, for all $k\in\{1,\ldots,m\}$, for the E-IT approximation.

Recall that for the relative entropy setting, \eqref{eq:MHTrecovered_W} in Theorem~\ref{Theorem: MHT_EITappr_optimalW} provides an optimal privacy mechanism $\mathbf{W}'(\tilde{\epsilon})$ for the E-IT approximation problem in \eqref{eq:MHT_approximation} with leakage bounds $\epsilon_k=\tilde{\epsilon}$ for all $k$. Specifically, for $m=2$, $\mathbf{W}'(\tilde{\epsilon})$ can also be expressed as \eqref{eq:W_generation} in Theorem~\ref{Theorem:BHT_optimalsol_HPapprox}, where $\ba^*$ and $\bv$ are the first columns of $\bU^*\bSigma^*$ and $\bV$ in \eqref{eq:MHTrecovered_W}, respectively. From Corollary~\ref{corollary:Renyidiv_KL_BHTNZEE}, in the high privacy regime $\mathbf{W}'(\tilde{\epsilon})$ in \eqref{eq:W_generation} is also the optimal mechanism (for the approximated PUT) for binary hypothesis testing in R\'enyi divergence setting. 

To evaluate the performance of $\mathbf{W}'$, we compare its utility to that achieved by an optimal mechanism $\bW^*$ of the original PUT problem (e.g., \eqref{eq:MHT_original} for the relative entropy setting or \eqref{eq:BHT_nonzeroEE} for the R\'enyi divergence setting). For a fair comparison of the utilities resulting from the E-IT and original PUTs, we choose the MI leakages to be the same for both cases. Thus, for the relative entropy setting (resp.\ R\'enyi divergence setting), we compare the values of $D(\mathbf{p}_2\mathbf{W}'(\tilde{\epsilon}) \| \mathbf{p}_1\mathbf{W}'(\tilde{\epsilon}))$ (resp.\ $D_{\alpha}(\mathbf{p}_2\mathbf{W}' (\tilde{\epsilon})\| \mathbf{p}_1\mathbf{W}'(\tilde{\epsilon}))$) and $D(\mathbf{p}_2\mathbf{W}^*(\epsilon) \| \mathbf{p}_1\mathbf{W}^*(\epsilon))$ (resp.\ $D_{\alpha}(\mathbf{p}_2\mathbf{W}^*(\epsilon) \| \mathbf{p}_1\mathbf{W}^*(\epsilon))$), where $\epsilon=\max_k I(\bp_k,\mathbf{W}'(\tilde{\epsilon}))$.

For the original PUT problems in \eqref{eq:MHT_original} and \eqref{eq:BHT_nonzeroEE}, the number of independent variables in $\mathbf{W}$ is $M(M-1)$ for $M=N$. Even for $M=3$, finding the optimal privacy mechanism $\mathbf{W}^*(\epsilon)$ using exhaustive search techniques is computationally prohibitive. Therefore, we restrict our numerical analysis to binary sources, i.e., $M=2$; furthermore, for numerical tractability in computing $\bW^*(\epsilon)$, we assume that $M=N=2$, i.e., the output alphabet is binary.  

For the E-IT approximated PUTs, since the choice of $\bw_0$ does not affect the optimal $\bA^*$, we choose $\mathbf{w}_0=(0.5, 0.5)$, for which from \eqref{eq:lemma1_v} and \eqref{eq:lemma1_vnorm}, we have $\mathbf{v}=\pm (\sqrt{0.5} ,-\sqrt{0.5})$. To capture the high privacy regime, we restrict $\tilde{\epsilon}\leq 0.2\min_{k}H(\bp_k)$. For these parameters, the following two subsections illustrate and discuss the regimes in which the E-IT approximation is accurate.

\begin{table*}
	\centering
	\begin{tabular}{|c|c|c|c|}
		\hline
		$\mathbf{p}_1$ $\mathbf{p}_2$ &  close to each other & close to the uniform distribution\\ \hline
		Pair 1: $(0.55,0.45)$ $(0.95,0.05)$   &  No 				 &$\bp_1$ Yes, $\bp_2$ No\\ \hline
		Pair 2: $(0.95,0.05)$ $(0.05,0.95)$   &  No 				 &$\bp_1$ No, $\bp_2$ No\\ \hline
		Pair 3: $(0.50,0.50)$ $(0.45,0.55)$   &  Yes 				 &$\bp_1$ Yes, $\bp_2$ Yes\\ \hline
		Pair 4: $(0.10,0.90)$ $(0.05,0.95)$   &  Yes 				 &$\bp_1$ No, $\bp_2$ No\\ \hline	
	\end{tabular}
	\caption{Distribution pairs for binary hypothesis testing}
	\label{Table:BHT_binarySources}
\end{table*} 	
	
\subsection{Binary Hypothesis Testing} 
We consider four pairs of Bernoulli distributions as shown in Table \ref{Table:BHT_binarySources} for the two source classes (hypotheses) to evaluate the accuracy of optimal mechanisms for the E-IT approximation in the relative entropy and R\'enyi divergence settings. Figures \ref{subfig:fig1_1}-\ref{subfig:fig1_4} illustrate the normalized utilities for Pairs 1-~4 in Table \ref{Table:BHT_binarySources}, respectively, as a function of the normalized MI leakages, i.e., $\epsilon/\min\{H(\mathbf{p}_1),H(\mathbf{p}_2)\}$. In the four figures, the left and  right $y$-axes are for normalized utilities in the relative entropy and R\'enyi divergence settings, respectively. 
Figures~\ref{subfig:fig1_1} and \ref{subfig:fig1_4} show that $\bW'$ and $\bW^*$ have the same utilities in the regions highlighted by the black-dotted ellipses, in which $\epsilon$ is smaller than $0.5\%$ and $0.1\%$ of $\min\{H(\mathbf{p}_1),H(\mathbf{p}_2)\}$, respectively. In contrast, for Figs.~\ref{subfig:fig1_2} and~\ref{subfig:fig1_3}, the utilities of $\bW'$ and $\bW^*$ are almost the same in the entire plotted range.
\begin{figure*}
	\centering
	\begin{subfigure}{0.5\textwidth}
		\centering
		\includegraphics[width=.9\columnwidth]{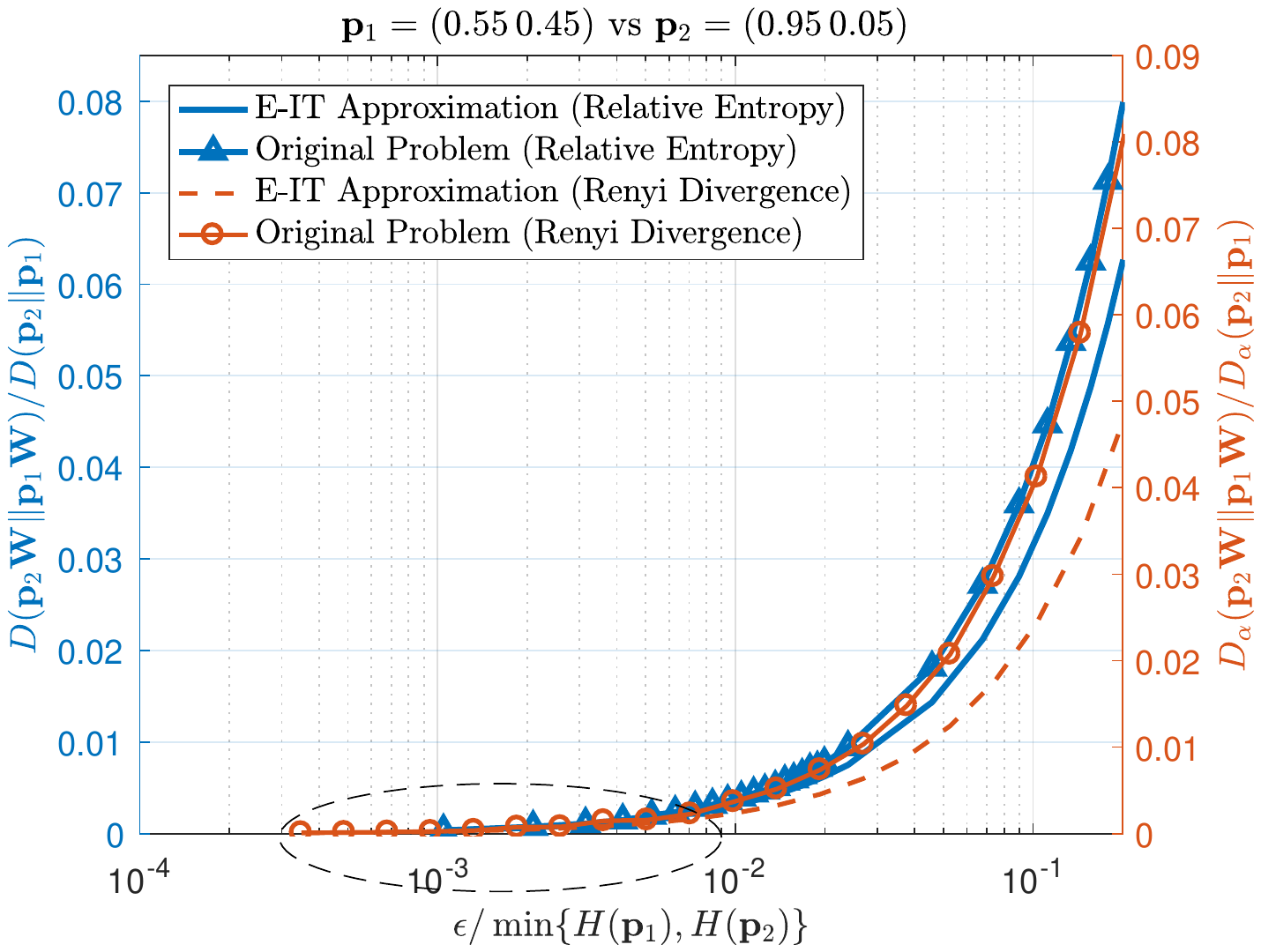}
		\caption{Pair 1}
		\label{subfig:fig1_1}
	\end{subfigure}%
	\begin{subfigure}{0.5\textwidth}
		\centering
		\includegraphics[width=.9\columnwidth]{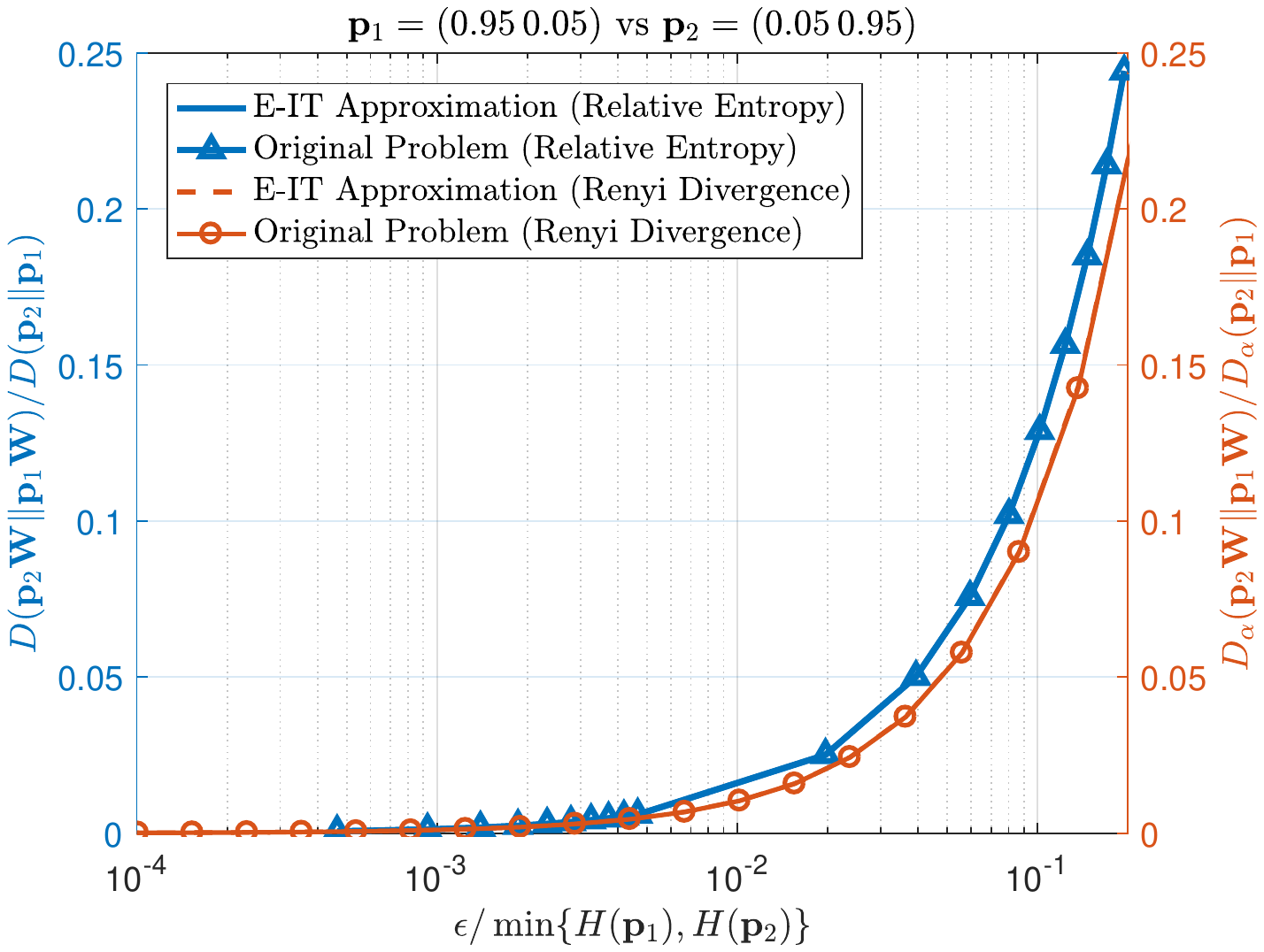}
		\caption{Pair 2}
		\label{subfig:fig1_2}
	\end{subfigure}\\
	\begin{subfigure}{0.5\textwidth}
		\centering
		\includegraphics[width=.9\columnwidth]{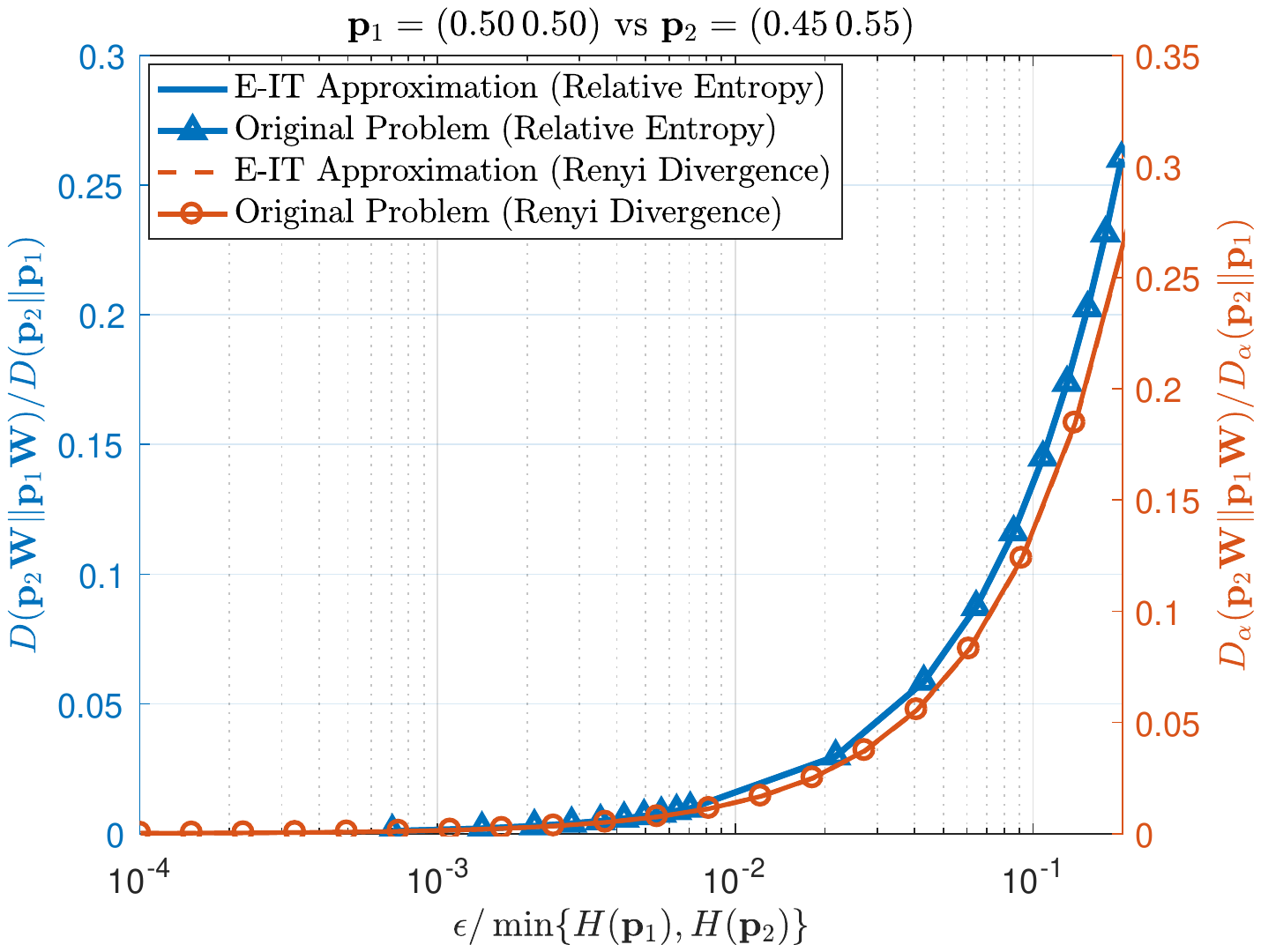}
		\caption{Pair 3}
		\label{subfig:fig1_3}
	\end{subfigure}%
	\begin{subfigure}{0.5\textwidth}
		\centering
		\includegraphics[width=.9\columnwidth]{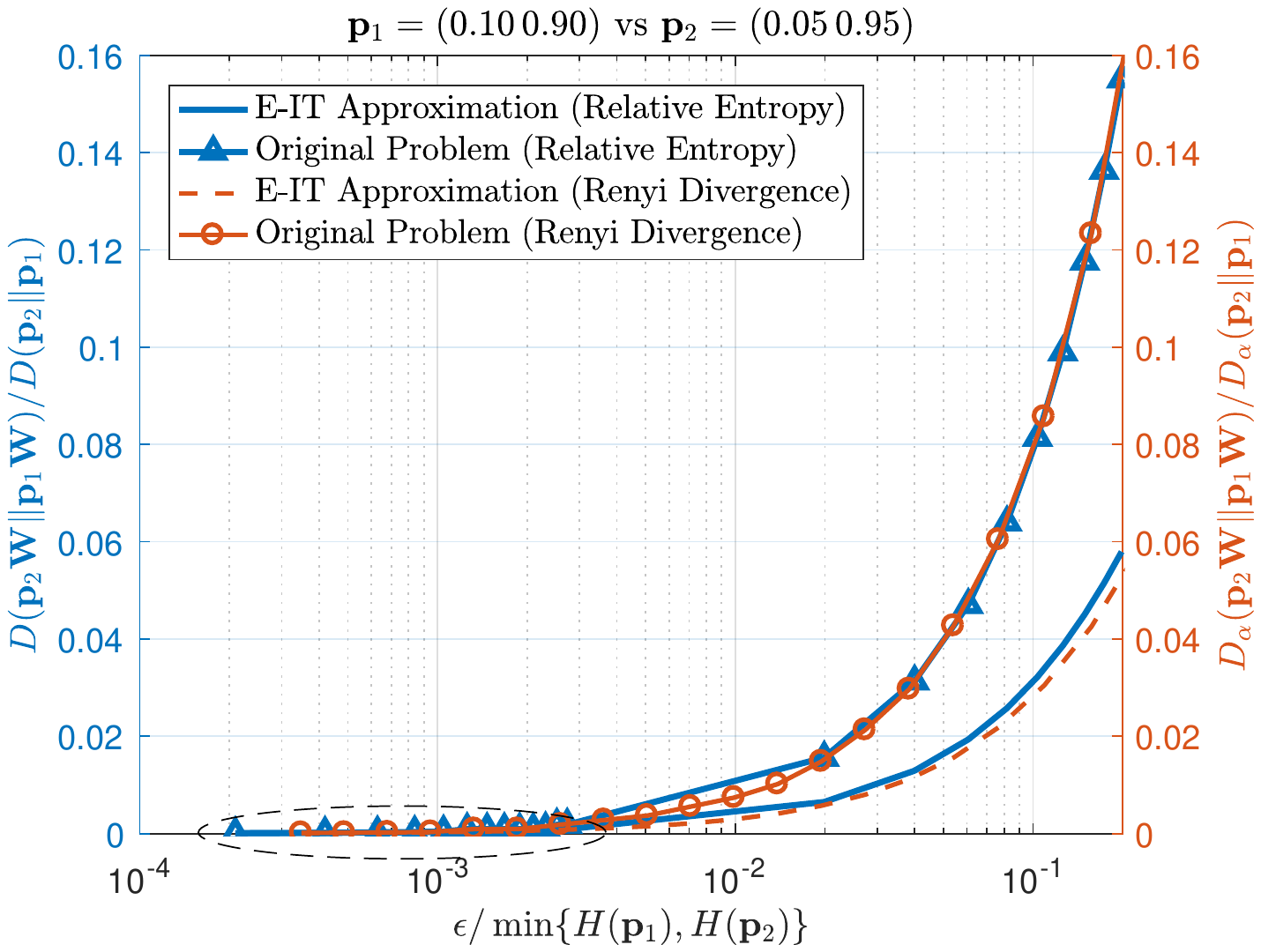}
		\caption{Pair 4}
		\label{subfig:fig1_4}
	\end{subfigure}\\
	\caption{The relative utilities of $\mathbf{W}'$ and $\mathbf{W}^*$ for the four distribution pairs in Table \ref{Table:BHT_binarySources}}
	\label{fig:fig1}
\end{figure*}

From Figs.~\ref{subfig:fig1_1}--\ref{subfig:fig1_4}, we deduce that for any two given distributions, there is high privacy regime in which the performance of the privacy mechanism $\bW'$ for the E-IT approximation is almost optimal; however, the range of the regime is specific to the distribution pairs. In particular, when both distributions are close to the uniform or when both are far apart from the uniform as well as each other, the set of leakage values for which the privacy mechanism $\bW'$ works well is larger. For the former, it can be seen that the E-IT approximations of the relative entropy and the MI are more accurate (cf.~\cite[Footnote~2]{EIT2015}); for the latter, the individual  approximation errors  ``cancel out'' so the overall approximation is accurate.

\subsection{Ternary Hypothesis Testing}
\begin{table}[t]
	\centering
	\begin{tabular}{|c|c|}
		\hline
		$\mathbf{p}_1$, $\mathbf{p}_2$, $\mathbf{p}_3$\\ \hline
		 Triple 1 $(0.50,0.50)$ $(0.45,0.55)$ $(0.55 ,0.45)$ \\ \hline
		 Triple 2 $(0.15,0.85)$ $(0.10,0.90)$ $(0.20 ,0.80)$ \\ \hline
	\end{tabular}
	\caption{Distribution triples for ternary hypothesis testing}
	\label{Table:MHT_3binarySources}
\end{table} 
We numerically evaluate our results for ternary hypothesis testing using three Bernoulli distributions, one for each of the three hypotheses. As shown in Table \ref{Table:MHT_3binarySources}, we consider two such triples. Figures \ref{subfig:MHTbs_1} and \ref{subfig:MHTbs_4} illustrate the normalized utilities for Triples 1 and 2, respectively, as a function of the normalized MI leakages, i.e., $\epsilon/\min_{k}\{H(\mathbf{p}_k),k\in\{1,2,3\}\}$. 

Fig.~\ref{subfig:MHTbs_1} shows that the normalized utilities for $\bW'$ and $\bW^*$ are almost the same in the entire plotted range for Triple 1. As shown in Fig.~\ref{subfig:MHTbs_4}, for Triple 2, the normalized utilities for $\mathbf{W}'$ and $\mathbf{W}^*$ are close only in the region where the MI-based leakages $\epsilon$ are less than $0.2\%$ of  $\min_k H(\mathbf{p}_k$. As for the binary case, here too, our plots show that the leakage range for which the approximation is tight depends on the distributions. For Triple 1, the good performance of the optimal mechanism $\bW'$ exists for a larger set of MI leakages than for Triple 2, because the three distributions in Triple 1 are close to uniform distribution such that the E-IT approximation is accurate.

\begin{figure*}
	\centering
	\begin{subfigure}{0.5\textwidth}
		\centering
		\includegraphics[width=.79\columnwidth]{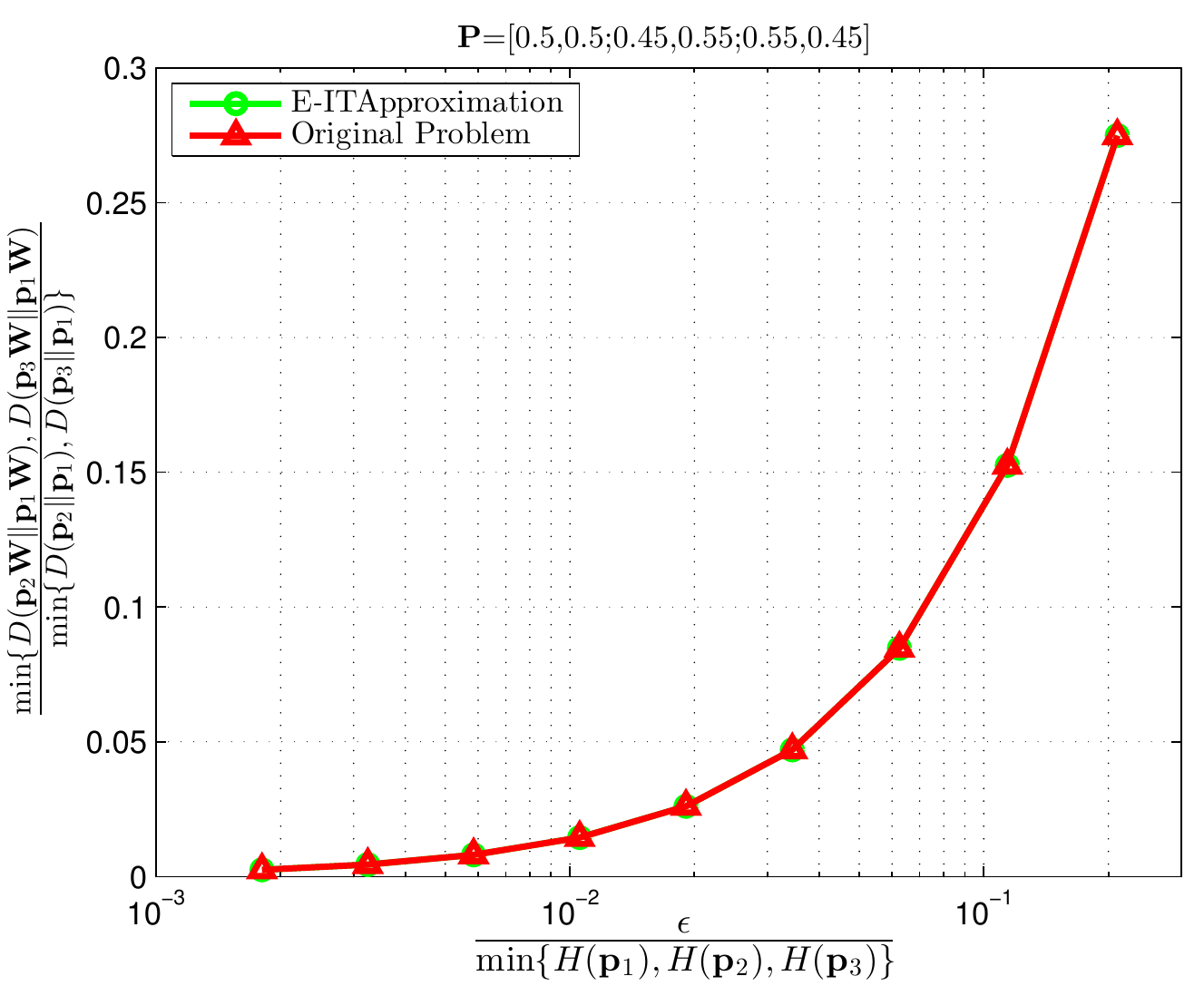}
		\caption{Triple 1}
		\label{subfig:MHTbs_1}
	\end{subfigure}%
	\begin{subfigure}{0.5\textwidth}
		\centering
		\includegraphics[width=.79\columnwidth]{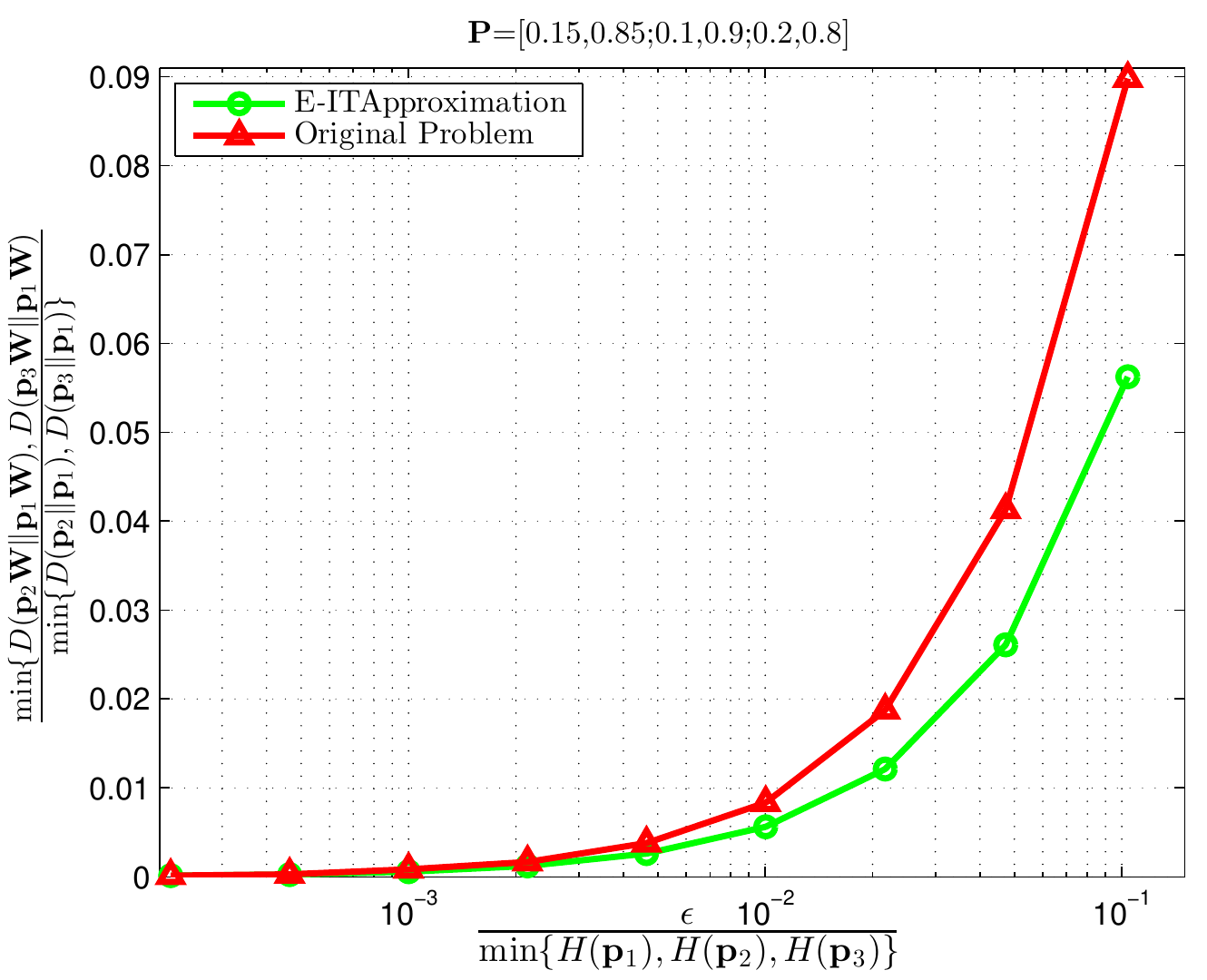}
		\caption{Triple 2}
		\label{subfig:MHTbs_4}
	\end{subfigure} \\
	\caption{The relative utilities of $\mathbf{W}'$ and $\mathbf{W}^*$ for the two distribution tuples in Table \ref{Table:MHT_3binarySources}}
	\label{fig:MHTbs}
\end{figure*}

\section{Concluding Remarks}\label{section:conclusion}
We have systematically studied the problem of publishing large datasets for binary and $m$-ary hypothesis testing under privacy constraints. Our goal, broadly, is to characterize the guarantees that can be made on the error exponents when a MI-based constraint on the leakage of data from any source class  is bounded. Our model seeks to understand if one can find the true probability distribution of a given dataset among a set of possible distributions without revealing the respondents of the data. We have shown that the optimal  PUT  is achieved through a randomizing privacy mechanism which maximizes the minimum of a set of the relative entropies  between pairs of distributions (one for each hypothesis), while ensuring that the MI-based leakages for all source classes are bounded. Focusing on the high privacy regime, we have developed an E-IT approximation of the PUT problem. For this problem, we have shown that the optimal mechanism can be viewed as a perturbation of a perfect privacy mechanism where the perturbation is computed as a solution to a convex optimization problem. As is expected of statistical metrics such as relative entropy and MI, our results reveal that the randomizing mechanism perturbs the statistical outliers the most for each source class. Such a mechanism ensures both utility (predominantly provided by the non-outliers) while preserving privacy of those most vulnerable to inference attacks.

Future work includes developing optimal mechanisms and PUTs for all possible leakage levels. Comparison with other privacy metrics is also of interest.


%

\appendices

\section{Proof of Proposition \ref{Proposition:approximation_MIandRE}}\label{proof:Proposition_approximation_MIandRE}
\begin{proof}
	Consider the high privacy regime in which $0\leq\epsilon_k \ll \min\{H(\mathbf{p}_k),k\in\{1,..,m\}\},\, k\in\{1,..,m\}$. In this regime, $\mathbf{W}$ can be written as a perturbation of $\mathbf{W}_0$ via
	\begin{align}
	\label{eq:Theta_to_W}
	\mathbf{W}=\mathbf{W}_0+\boldsymbol{\Theta}
	\end{align}
	where $\mathbf{W}_0$ is a mechanism achieving perfect privacy with all rows equal to $\mathbf{w}_0$, where $\mathbf{w}_0$ is chosen such that its entries $w_{0j}\neq 0, \forall j \in \{1,\ldots,N\}$, and $\mathbf{\Theta}$ is a matrix with
	\begin{align}
	\label{eq:Con_Theta1}
	\sum_{j=1}^{N}\Theta_{ij}&=0,   \qquad \,\,\forall\,  i\in \{1, \ldots, M\} \\*
	\label{eq:Con_Theta2}
	|\Theta_{ij}&| \leq \rho w_{0j}  \quad \forall\,  i\in \{1, \ldots, M\} ,j\in \{1, \ldots, N\}.
	\end{align}
	where the radius of the neighborhood around $\mathbf{w}_0$ is~$\rho \!\in\! [0,1)$.
	
	Note that \eqref{eq:Con_Theta1} is derived from the row stochasticity of $\mathbf{W}$ and $\mathbf{W}_0$. The constraint in \eqref{eq:Con_Theta2} captures the fact that approximating about a perfect privacy achieving mechanism   requires restricting the entries of the perturbation matrix $\boldsymbol{\Theta}$ to be within a fraction $\rho$ of~$\mathbf{w}_0$.
	
	The perturbation modeled in \eqref{eq:Theta_to_W}-\eqref{eq:Con_Theta2} implies that every row in $\mathbf{W}$ and the output distribution $\mathbf{p}_1\mathbf{W}$ and $\mathbf{p}_k\mathbf{W}$ for all $k\in\{2,\ldots,m\}$ are in a neighborhood about $\mathbf{w}_0$ given by
	\begin{subequations}
		\begin{align}
		\label{eq:OrigProbHP_ApproxCond3}
		&\big|W_{ij}-w_{0j}\big|\leq\rho w_{0j}, \\
		\label{eq:OrigProbHP_ApproxCond1}
		&\big|\big(\mathbf{p}_k\mathbf{W}\big)_j-w_{0j}\big|\leq\rho w_{0j}, \text{ for all } k \in\{1,\ldots,m\}
		\end{align}
	\end{subequations}
	for all $i \in \{1, \ldots, M\}$ and $j \in\{1, \ldots, N\}$.  
	In this neighborhood, we can approximate the relative entropy $D(\mathbf{p}_k\mathbf{W} \| \mathbf{p}_1\mathbf{W})$ using a Taylor series around $\mathbf{W}_0$ as
	\begin{align}
	  D(\mathbf{p}_k\mathbf{W} \| \mathbf{p}_1\mathbf{W}) 
 &= \frac{1}{2}\sum_{j=1}^{N}\frac{\big(\sum_{i=1}^{M}p_{ki}\Theta_{ij}-\sum_{i=1}^{M}p_{1i}\Theta_{ij}\big)^2}{\sum_{i=1}^{M}p_{ki}W_{ij}}\nonumber\\
	&\quad +o\Big(\big\|(\mathbf{p}_k-\mathbf{p}_1)\boldsymbol{\Theta}\big[(\mathbf{p}_k\mathbf{W})^{-\frac{1}{2}}\big]\big\|^2_\infty\Big)\\
	\label{eq:ApproximateD_step2}
	&\approx   \frac{1}{2}\Big\|(\mathbf{p}_k-\mathbf{p}_1)\boldsymbol{\Theta}\big[(\mathbf{p}_k\mathbf{W})^{-\frac{1}{2}}\big]\Big\|^2\\
	\label{eq:ApproximateD_step3}
	&\approx  \frac{1}{2}\big\|(\mathbf{p}_k-\mathbf{p}_1)\boldsymbol{\Theta}[(\mathbf{w}_0)^{-\frac{1}{2}}]\big\|^2
	\end{align}
	where \eqref{eq:ApproximateD_step2} results from approximating the relative entropy by the $\chi^2$-divergence   \cite[Theorem 4.1]{ITSt_Tutorial}, and \eqref{eq:ApproximateD_step3} results from applying the neighborhood condition in \eqref{eq:OrigProbHP_ApproxCond1}. 
	
	Similarly, one can approximate the MI between the source class $k\in\{1,\ldots,m\}$ and its output as 
	\begin{align}
	  I(\mathbf{p}_k,\mathbf{W}) 
	\label{eq:ApproximateI_step1}
	 &=\frac{1}{2}\sum_{i=1}^{M}p_{ki}\bigg(\sum_{j=1}^{N}\frac{\big(W_{ij}-\sum_{i=1}^{M}p_{ki}W_{ij}\big)^2}{W_{ij}} \nonumber\\*
	&\qquad  + o\Big(\big\|\big(\mathbf{W}_i-\mathbf{p}_k\mathbf{W}\big)\big[(\mathbf{W}_i)^{-\frac{1}{2}}\big]\big\|^2_\infty\Big)\bigg)\\\
	\label{eq:ApproximateI_step2}
	&\approx  \frac{1}{2}\sum_{i=1}^{M}p_{ki}\Big\|\big(\mathbf{W}_i-\mathbf{p}_k\mathbf{W}\big)\big[(\mathbf{W}_i)^{-\frac{1}{2}}\big]\Big\|^2 \\
	\label{eq:ApproximateI_step3}
	&\approx \frac{1}{2}\sum_{i=1}^{M}p_{ki}\big\|\big(\mathbf{w}_0\!+\!\boldsymbol{\Theta}_{i}\!-\!\mathbf{p}_k\mathbf{W}\big)[(\mathbf{w}_0)^{-\frac{1}{2}}]\big\|^2 \\
	\label{eq:ApproximateI_step4}
	&\approx \frac{1}{2}\sum_{i=1}^{M}p_{ki}\big\|\boldsymbol{\Theta}_{i}[(\mathbf{w}_0)^{-\frac{1}{2}}]\big\|^2,
	\end{align}
	where $\mathbf{W}_{i}$ and $\boldsymbol{\Theta}_{i}$ are the $i^{\mathrm{th}}$ rows of $\mathbf{W}$ and $\boldsymbol{\Theta}$, respectively.  Let $\mathbf{A}$ be a matrix with entries
	\begin{align}
	\label{eq:A_to_Theta}
	A_{ij}=\frac{\Theta_{ij}}{\sqrt{w_{0j}}},\qquad  i\in \{1, \ldots, M\} ,j \in \{1, \ldots, N\} ,
	\end{align} 
	From \eqref{eq:Con_Theta1}, $\mathbf{A}(\sqrt{\mathbf{w}_0})^T=\mathbf{0}$, where $\sqrt{\mathbf{w}_0}$ is the vector whose entries are the square root of the entries of $\mathbf{w}_0$. Thus,  the approximations in~\eqref{eq:EITapproximation_RE} and \eqref{eq:EITapproximation_MI} lead  to~\eqref{eq:EITapproximation_RE_A} and~\eqref{eq:EITapproximation_MI_A} resp.
\end{proof}


\section{Proof of Theorem \ref{Theorem:BHT_approximation_simplification}}

\begin{proof}\label{proof:BHT_approximation_simplification}
	Consider the following optimization problem obtained from \eqref{eq:BHT_approximation} without the constraint $\mathbf{A}(\sqrt{\mathbf{w}_0})^T=\mathbf{0}$:
	\begin{equation}\label{eq:BHT_approximation_a}
	\begin{aligned}
	\max_{\substack{\mathbf{A}}}\quad & \sum_{i,j=1}^{M}\frac{1}{2}(\mathbf{p}_2-\mathbf{p}_1)_i(\mathbf{p}_2-\mathbf{p}_1)_j\mathbf{A}_i\mathbf{A}_j^T\\
	\mathrm{s.t.} \quad &\frac{1}{2}\sum_{i=1}^{M}p_{ki}\mathbf{A}_i\mathbf{A}_i^T\leq \epsilon_k \quad k=1,2\\
	\end{aligned}
	\end{equation}	
	Let $\Delta_i \triangleq (\mathbf{p}_1-\mathbf{p}_2)_i$, $i\in\{1,\ldots,M\}$. Furthermore, let $\mathbf{a}$ be a row vector with entries $a_i$ for all $i$ that $|a_i|$ is the Euclidean norm of the $i^{\mathrm{th}}$ row $\mathbf{A}_i$ of $\mathbf{A}$. Let $\boldsymbol{\Omega}$ denote the symmetric matrix of the cosines of angles between the rows of $\mathbf{A}$, such that its entries $\Omega_{ij}\triangleq\cos\angle(\mathbf{A}_i,\mathbf{A}_j)$, $i,j\in \{1,\ldots,M\}$, with $|\Omega_{ij}|\leq 1$ for $i\neq j$ and $\Omega_{ij}=1$ for $i=j$. Rewriting \eqref{eq:BHT_approximation_a} with these variables, we have
	\begin{equation}\label{eq:Vari_Mat2Vec1}
	\begin{aligned}
	\max_{\substack{\mathbf{a},\boldsymbol{\Omega}}}\quad & \sum_{i=1}^{M}\sum_{j=1}^{M}\frac{1}{2}\Delta_i\Delta_j|a_i \| a_j|\Omega_{ij}\\
	\mathrm{s.t.} \quad 
	&\Omega_{ii}= 1 \\
	&|\Omega_{ij}|\leq 1 \quad i\neq j \in \{1, \ldots, M\} \\
	& \frac{1}{2}\sum_{i=1}^{M}p_{ki}a_i^2\leq \epsilon_k, \quad k=1,2\\
	\end{aligned}
	\end{equation}
	Consider first the optimization over $\boldsymbol{\Omega}$. Since the objective   is linear in $\boldsymbol{\Omega}$ and the feasible region of $\boldsymbol{\Omega}$ is a hypercube, \eqref{eq:Vari_Mat2Vec1} is a linear program whose optimal solution is at one of the extreme points of the hypercube \cite[Theorem 3.5.3]{Nonlinear_Programming}, i.e., the optimal solution $\mathbf{\Omega}^*$ has entries $|\Omega_{i,j}^*|=1$ for all $i,j$. Thus, all the rows of an $\mathbf{A}$ maximizing~\eqref{eq:BHT_approximation_a} are parallel, and therefore, the optimal solution $\mathbf{A}^*$ of~\eqref{eq:BHT_approximation_a} is a rank-1 matrix.
	
	In addition, from the objective function in \eqref{eq:Vari_Mat2Vec1}, if the signs of $\Delta_1\Delta_i$ and $\Delta_1\Delta_j$ are known for any $i,j\in \{2,\ldots, M\}$, the sign of $\Delta_i\Delta_j$ can be determined. Furthermore, maximizing the objective requires that $\Omega_{ij}$ has the same sign as its coefficient $\Delta_i\Delta_j$. Therefore, $\Omega^*_{ij}=\Omega^*_{1i}\Omega^*_{1j}$ for all $i,j\in\{2,\ldots,M\}$, i.e., $\boldsymbol{\Omega}^*$ has only $M-1$ independent entries $\Omega^*_{1j}$, $j\in \{2,\ldots,M\}$ with $\Omega_{ii}^*=1$ for all $i$.
	
	Thus, we see that the optimization in \eqref{eq:Vari_Mat2Vec1} depends on only $M$ values of $|a_i|$, $i\in \{1,\ldots,M\}$, and $M-1$ signs of $\Omega^*_{1i}$, $i\in \{2,\ldots,M\}$. Let $\mathbf{v}$ denote a unit norm vector with no zero entry, 
	and $\theta_i^*$ for all $i$ represent the direction of $i^{\mathrm{th}}$ row of $\mathbf{A}$ with respect to $\mathbf{v}$, such that $\Omega^*_{ij}=\theta_i^*\theta_j^*$, and the $i^{\mathrm{th}}$ row of $\mathbf{A}$ can be written as $\mathbf{A}_i=\theta_i^*|a_i|\mathbf{v}=a_i\mathbf{v}$. 
	The optimization in \eqref{eq:Vari_Mat2Vec1} can now be written as a function of the vector $\mathbf{a}$ as
	\begin{equation}\label{eq:Vari_Mat2Vec2}
	\begin{aligned}
	\max_{\mathbf{a}}\quad & \sum_{i=1}^{M}\sum_{j=1}^{M}\frac{1}{2}(\mathbf{p}_2-\mathbf{p}_1)_i(\mathbf{p}_2-\mathbf{p}_1)_ja_ia_j\\
	\mathrm{s.t.} \quad 
	& \frac{1}{2}\sum_{i=1}^{M}p_{ki}a_i^2\leq \epsilon_k,\quad k = 1,2.
	\end{aligned}
	\end{equation}
	The optimal solution $\mathbf{A}^*$ of \eqref{eq:BHT_approximation_a} is related to $\mathbf{a}^*$ optimizing~\eqref{eq:Vari_Mat2Vec2} as follows:
	\begin{align}
	\label{eq:optimal_A}
	\mathbf{A}^*=(\mathbf{a}^*)^T\mathbf{v}.
	\end{align}
	The optimal solution in \eqref{eq:Vari_Mat2Vec2} yields both the magnitude and sign of $a_i$ for all $i$.
	Also, $\mathbf{v}$ in \eqref{eq:optimal_A} can be chosen to satisfy
	\begin{align}
	\label{eq:optimalA_ORTH}
	\mathbf{A}^*(\sqrt{\mathbf{w}_0})^T=(\mathbf{a}^*)^T\mathbf{v}(\sqrt{\mathbf{w}_0})^T=\mathbf{0}.
	\end{align}
	Thus, by using \eqref{eq:optimal_A} and \eqref{eq:optimalA_ORTH} and solving for $\mathbf{a}^*$ in \eqref{eq:BHT_approx_2}, we obtain $\mathbf{A}^*$ in~\eqref{eq:BHT_approximation} as desired. For any $\bw_0\in\mathbb{R}^N$, the condition $N= 2$ is sufficient to obtain a $\bv$ satisfying $\mathbf{v}(\sqrt{\mathbf{w}_0})^T=0$, i.e., $\mathbf{A}^*(\sqrt{\mathbf{w}_0})^T=\mathbf{0}$. Therefore, binary output alphabets suffices.
\end{proof}

\section{Proof of Theorem \ref{Theorem:BHT_optimalsol_HPapprox}}\label{proof:Theorem_BHT_optimalsol_HPapprox}
To prove Theorem \ref{Theorem:BHT_optimalsol_HPapprox}, we use Theorem~\ref{Theorem:BHT_approximation_simplification} and  two lemmas. The problem in \eqref{eq:BHT_approx_2} maximizes a convex function over a convex set, and thus, it is not a convex program. However, we show how the problem can be reduced to a convex program  and we also obtain a closed-form  solution. 

\begin{lemma}\label{lemma_equivalentOS}
	The following convex program completely determines the solutions of \eqref{eq:BHT_approx_2}, 
	\begin{equation}\label{eq:BHT_approx_3simp}
	\begin{aligned}
	\max_{\substack{\mathbf{a}}}\quad & 
	\frac{1}{2}\lambda_{\mathrm{p}}\mathbf{a}(\mathbf{v}_{\mathrm{p}})^T\\
	\mathrm{s.t.} \quad &\frac{1}{2}\mathbf{a}\big[\mathbf{p}_k\big]\mathbf{a}^T\leq \epsilon_k,\quad k=1,2
	\end{aligned}
	\end{equation}
	such that the optimal solutions of \eqref{eq:BHT_approx_2} are $\pm\mathbf{a}^*$ where $\mathbf{a}^*$ is the optimal solution of \eqref{eq:BHT_approx_3simp}.
\end{lemma}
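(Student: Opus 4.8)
The plan is to exploit two structural features of \eqref{eq:BHT_approx_2}: the rank-one nature of the quadratic form in the objective, and the invariance of both the objective and the feasible set under the sign flip $\mathbf{a}\mapsto-\mathbf{a}$. First I would rewrite the objective. Since $\mathbf{a}(\mathbf{p}_2-\mathbf{p}_1)^T$ is a scalar, $\|\mathbf{a}(\mathbf{p}_2-\mathbf{p}_1)^T\|^2=\mathbf{a}(\mathbf{p}_2-\mathbf{p}_1)^T(\mathbf{p}_2-\mathbf{p}_1)\mathbf{a}^T$. The matrix $(\mathbf{p}_2-\mathbf{p}_1)^T(\mathbf{p}_2-\mathbf{p}_1)$ is rank one, with single nonzero eigenvalue $\lambda_{\mathrm{p}}=\|\mathbf{p}_2-\mathbf{p}_1\|^2$ and unit eigenvector $\mathbf{v}_{\mathrm{p}}=(\mathbf{p}_2-\mathbf{p}_1)/\|\mathbf{p}_2-\mathbf{p}_1\|$, so it equals $\lambda_{\mathrm{p}}(\mathbf{v}_{\mathrm{p}})^T\mathbf{v}_{\mathrm{p}}$. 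Substituting shows that the objective of \eqref{eq:BHT_approx_2} is $\tfrac{1}{2}\lambda_{\mathrm{p}}\big(\mathbf{a}(\mathbf{v}_{\mathrm{p}})^T\big)^2$, i.e., a positive multiple of the square of the very linear functional $f(\mathbf{a})\triangleq\mathbf{a}(\mathbf{v}_{\mathrm{p}})^T$ appearing in \eqref{eq:BHT_approx_3simp}.

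Next I would invoke symmetry. The feasible set $\mathcal{F}=\{\mathbf{a}:\tfrac{1}{2}\mathbf{a}[\mathbf{p}_k]\mathbf{a}^T\le\epsilon_k,\ k=1,2\}$ is symmetric about the origin, and $f(-\mathbf{a})=-f(\mathbf{a})$. Hence maximizing $f(\mathbf{a})^2$ over $\mathcal{F}$ is equivalent to maximizing $|f(\mathbf{a})|$, and by the symmetry of $\mathcal{F}$ one has $\max_{\mathbf{a}\in\mathcal{F}}|f(\mathbf{a})|=\max_{\mathbf{a}\in\mathcal{F}}f(\mathbf{a})$ (replacing $\mathbf{a}$ by $-\mathbf{a}$ shows the two one-sided maxima coincide). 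This last maximization is exactly \eqref{eq:BHT_approx_3simp} up to the positive constant $\tfrac{1}{2}\lambda_{\mathrm{p}}$. Because $[\mathbf{p}_k]$ is positive definite (all $p_{ki}>0$), the constraints are convex, so \eqref{eq:BHT_approx_3simp} maximizes a linear objective over a convex set and is a genuine convex program, as claimed.

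Finally I would match the solution sets. If $\mathbf{a}^*$ solves \eqref{eq:BHT_approx_3simp}, then $f(\mathbf{a}^*)=\max_{\mathcal{F}}f=\max_{\mathcal{F}}|f|>0$ (strictly positive since $\mathbf{0}\in\mathrm{int}\,\mathcal{F}$ whenever $\epsilon_k>0$), so $\mathbf{a}^*$ attains the optimum of \eqref{eq:BHT_approx_2}, and applying the sign flip shows $-\mathbf{a}^*$ does as well. Conversely, any maximizer $\mathbf{a}$ of \eqref{eq:BHT_approx_2} satisfies $|f(\mathbf{a})|=\max_{\mathcal{F}}|f|$, hence $f(\mathbf{a})=\pm\max_{\mathcal{F}}f$, so $\mathbf{a}$ or $-\mathbf{a}$ solves \eqref{eq:BHT_approx_3simp}; together these give that the optimizer set of \eqref{eq:BHT_approx_2} is precisely $\{\pm\mathbf{a}^*\}$.

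The step needing the most care — and the main obstacle — is the uniqueness of $\mathbf{a}^*$ implicit in the phrase ``the optimal solution,'' which is what reduces the full optimizer set to the clean pair $\pm\mathbf{a}^*$ rather than a larger symmetric family. I would argue this by noting that a nonzero linear functional over the intersection of the two origin-centered ellipsoids cannot be maximized at an interior point (the gradient $\mathbf{v}_{\mathrm{p}}$ is nonzero), and that strict convexity of each binding ellipsoidal constraint precludes a flat maximizing face; when a single constraint is active the maximizer is the unique boundary point whose normal is parallel to $\mathbf{v}_{\mathrm{p}}$, and when both are active it is pinned down by the two active quadratic equalities. This uniqueness is in any case confirmed constructively by the explicit KKT solutions $\mathbf{a}^*$ derived case-by-case in Theorem~\ref{Theorem:BHT_optimalsol_HPapprox}, which I would cite to close the argument.
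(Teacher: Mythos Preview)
Your proposal is correct and follows essentially the same approach as the paper: rewrite the rank-one quadratic objective as $\tfrac{1}{2}\lambda_{\mathrm{p}}(\mathbf{a}(\mathbf{v}_{\mathrm{p}})^T)^2$, then use the sign-flip symmetry of the feasible set to drop the square and obtain the linear (hence convex) program~\eqref{eq:BHT_approx_3simp}. If anything, you are more careful than the paper's proof, which does not explicitly argue the converse direction or the uniqueness of $\mathbf{a}^*$; your observation that uniqueness is ultimately supplied by the explicit KKT solutions in Theorem~\ref{Theorem:BHT_optimalsol_HPapprox} is exactly how the paper implicitly handles it.
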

\begin{proof}
	For the optimization problem in \eqref{eq:BHT_approx_2}, the matrix $(\mathbf{p}_2-\mathbf{p}_1)^T(\mathbf{p}_2-\mathbf{p}_1)$ is rank-1 with eigenvalue $\lambda_{\mathrm{p}}=\|\mathbf{p}_2-\mathbf{p}_1\|^2$ and eigenvector $\mathbf{v}_{\mathrm{p}}=\frac{\mathbf{p}_2-\mathbf{p}_1}{\|\mathbf{p}_2-\mathbf{p}_1\|}$.
	Thus, we have
	\begin{align}
	\frac{1}{2}\mathbf{a}(\mathbf{p}_2-\mathbf{p}_1)^T(\mathbf{p}_2-\mathbf{p}_1)\mathbf{a}^T=\frac{1}{2}\lambda_{\mathrm{p}}(\mathbf{a}(\mathbf{v}_{\mathrm{p}})^T)^2,
	\end{align} 
	leading to the following optimization problem 
	\begin{equation}\label{eq:BHT_approx_3}
	\begin{aligned}
	\max_{\substack{\mathbf{a}}}\quad & 
	\frac{1}{2}\lambda_{\mathrm{p}}(\mathbf{a}(\mathbf{v}_{\mathrm{p}})^T)^2\\
	\mathrm{s.t.} \quad &\frac{1}{2}\mathbf{a}\big[\mathbf{p}_k\big]\mathbf{a}^T \leq \epsilon_k, \quad k = 1,2.
	\end{aligned}
	\end{equation}
	In \eqref{eq:BHT_approx_3simp} and \eqref{eq:BHT_approx_3}, the two objectives depend  on $\mathbf{a}$ in the same manner and their constraint functions are the same. Hence  the optimal solution $\mathbf{a}^*$ of \eqref{eq:BHT_approx_3simp} optimizes \eqref{eq:BHT_approx_3}. Since the objective and constraint functions of \eqref{eq:BHT_approx_3} are even, $-\mathbf{a}^*$ is feasible and yields the optimal value, i.e, $-\mathbf{a}^*$ is also optimal for~\eqref{eq:BHT_approx_3}.
\end{proof}

The optimal solution $\mathbf{a}^*$ of \eqref{eq:BHT_approx_3simp} can be evaluated by observing that at $\mathbf{a}^*$, either one or both constraints are active. The following lemma summarizes the optimal solution of \eqref{eq:BHT_approx_3simp}. From Lemma \ref{lemma_equivalentOS}, one can then obtain the optimal solution \eqref{eq:BHT_approx_2}.
\begin{lemma}\label{lamma_optimalalpha}
	The optimal solutions of \eqref{eq:BHT_approx_2} are given by: 
	\begin{enumerate}
\item if only the first constraint is active, i.e., $\epsilon_1$ and $\epsilon_2$ satisfy \eqref{eq:const1_active},
the optimal solution $\boldsymbol{\alpha}^*$ is \eqref{eq:optalpha_const1};
\item if only the second constraint is active, i.e., $\epsilon_1$ and $\epsilon_2$ satisfy \eqref{eq:const2_active},
the optimal solution $\boldsymbol{\alpha}^*$ is \eqref{eq:optalpha_const2};	
\item when both constraints are active, the optimal solution $\boldsymbol{\alpha}^*$ is \eqref{eq:optimalalpha}
with $\eta_1^*,\eta_2^*>0$ satisfying \eqref{eq:eta1} and~\eqref{eq:eta2}.
	\end{enumerate}	
\end{lemma}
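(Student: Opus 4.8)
The plan is to apply the Karush--Kuhn--Tucker (KKT) machinery to the convex program \eqref{eq:BHT_approx_3simp} furnished by Lemma~\ref{lemma_equivalentOS}, and then to lift the answer back to \eqref{eq:BHT_approx_2} via the $\pm$ correspondence of that lemma. First I would record that the feasible set of \eqref{eq:BHT_approx_3simp} is the intersection of two solid ellipsoids $\{\mathbf{a}:\tfrac12\mathbf{a}[\mathbf{p}_k]\mathbf{a}^T\le\epsilon_k\}$, hence compact and convex, and that $\mathbf{a}=\mathbf{0}$ is strictly feasible when $\epsilon_k>0$ and each $\mathbf{p}_k$ has full support so that $[\mathbf{p}_k]\succ0$. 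Slater's condition then holds, so the KKT conditions are both necessary and sufficient for optimality and a maximizer with associated multipliers is guaranteed to exist. Because the objective $\tfrac12\lambda_{\mathrm{p}}\mathbf{a}(\mathbf{v}_{\mathrm{p}})^T$ is linear and $\lambda_{\mathrm{p}}=\|\mathbf{p}_2-\mathbf{p}_1\|^2>0$, it is non-constant, so its maximizer lies on the boundary and at least one constraint is active at the optimum.

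Next I would form the Lagrangian with multipliers $\eta_1^*,\eta_2^*\ge0$ and write the stationarity condition
\[
\tfrac12\lambda_{\mathrm{p}}\mathbf{v}_{\mathrm{p}}=\mathbf{a}^*\big(\eta_1^*[\mathbf{p}_1]+\eta_2^*[\mathbf{p}_2]\big).
\]
Since the left-hand side is nonzero, the diagonal matrix $\eta_1^*[\mathbf{p}_1]+\eta_2^*[\mathbf{p}_2]$ cannot be the zero matrix, so at least one multiplier is strictly positive; as $[\mathbf{p}_k]\succ0$, this matrix is then invertible and
\[
\mathbf{a}^*=\tfrac{\lambda_{\mathrm{p}}}{2}\mathbf{v}_{\mathrm{p}}\big(\eta_1^*[\mathbf{p}_1]+\eta_2^*[\mathbf{p}_2]\big)^{-1},
\]
which is exactly \eqref{eq:optimalalpha}.

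The bulk of the work is the case split dictated by complementary slackness. If only the first constraint is active ($\eta_2^*=0$), the stationarity formula collapses to $\mathbf{a}^*=\tfrac{\lambda_{\mathrm{p}}}{2\eta_1^*}\mathbf{v}_{\mathrm{p}}[(\mathbf{p}_1)^{-1}]$; substituting this into the tight constraint $\tfrac12\mathbf{a}^*[\mathbf{p}_1](\mathbf{a}^*)^T=\epsilon_1$ and using the diagonal identity $[(\mathbf{p}_1)^{-1}][\mathbf{p}_1][(\mathbf{p}_1)^{-1}]=[(\mathbf{p}_1)^{-1}]$ pins down $\eta_1^*$ and yields the closed form \eqref{eq:optalpha_const1}. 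The consistency requirement that the inactive second constraint not be violated, $\tfrac12\mathbf{a}^*[\mathbf{p}_2](\mathbf{a}^*)^T\le\epsilon_2$, reduces---via $[(\mathbf{p}_1)^{-1}][\mathbf{p}_2][(\mathbf{p}_1)^{-1}]=\big[\tfrac{\mathbf{p}_2}{(\mathbf{p}_1)^2}\big]$---precisely to the ratio condition \eqref{eq:const1_active}. The case $\eta_1^*=0$ is the mirror image and gives \eqref{eq:optalpha_const2} under \eqref{eq:const2_active}. When both constraints are active, both multipliers are positive and I would substitute \eqref{eq:optimalalpha} into each tight constraint $\tfrac12\mathbf{a}^*[\mathbf{p}_k](\mathbf{a}^*)^T=\epsilon_k$; the diagonal algebra (again using that all $[\mathbf{p}_k]$ commute) turns these into the two scalar equations \eqref{eq:eta1} and \eqref{eq:eta2} determining $\eta_1^*,\eta_2^*$. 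Finally, appealing to Lemma~\ref{lemma_equivalentOS}, the optimizers of the original problem \eqref{eq:BHT_approx_2} are $\pm\mathbf{a}^*$.

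The main obstacle I anticipate is not the algebra---which is routine once one exploits that the matrices $[\mathbf{p}_k]$ are diagonal and commute---but organizing the case analysis cleanly: verifying that the three cases are exhaustive and that the boundary conditions \eqref{eq:const1_active} and \eqref{eq:const2_active} correctly separate the single-active regimes from the doubly-active regime. I would also want to confirm existence of a positive pair $(\eta_1^*,\eta_2^*)$ solving \eqref{eq:eta1}--\eqref{eq:eta2} in the doubly-active case; this follows from the Slater-guaranteed attainment of a KKT point together with the monotonicity of each map $\boldsymbol{\eta}\mapsto\mathbf{v}_{\mathrm{p}}[\mathbf{p}_k]\big(\eta_1[\mathbf{p}_1]+\eta_2[\mathbf{p}_2]\big)^{-2}(\mathbf{v}_{\mathrm{p}})^T$, which decreases from $+\infty$ toward $0$ as the multipliers grow.
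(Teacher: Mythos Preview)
Your proposal is correct and follows essentially the same route as the paper: reduce to the convex program \eqref{eq:BHT_approx_3simp} via Lemma~\ref{lemma_equivalentOS}, verify Slater's condition with $\mathbf{a}=\mathbf{0}$, write the KKT stationarity condition to obtain $\mathbf{a}^*=\frac{\lambda_{\mathrm{p}}}{2}\mathbf{v}_{\mathrm{p}}\big(\eta_1^*[\mathbf{p}_1]+\eta_2^*[\mathbf{p}_2]\big)^{-1}$, and then perform the three-way case split on which multipliers are positive, substituting back into the active constraints to pin down the multipliers and derive \eqref{eq:const1_active}--\eqref{eq:eta2}. The only minor discrepancy is that when only one constraint is active you phrase the consistency check as the inactive constraint being ``not violated'' (a weak inequality), whereas the lemma's conditions \eqref{eq:const1_active} and \eqref{eq:const2_active} are strict; just be careful to match the strictness so that the three cases partition cleanly.
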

\begin{proof}
	From Lemma \ref{lemma_equivalentOS}, to find the optimal solutions of \eqref{eq:BHT_approx_2}, it suffices to find the optimal solution to \eqref{eq:BHT_approx_3simp}. In \eqref{eq:BHT_approx_3simp}, the objective function is linear in $\mathbf{a}$. Since $\mathbf{p}_1$ and $\mathbf{p}_2$ are interior points of probability simplex, both $\big[\mathbf{p}_1\big]$ and $\big[\mathbf{p}_2\big]$ are positive definite, i.e., the two constraint functions are convex in $\mathbf{a}$. Thus, this is a convex program. In addition, $\mathbf{a}=\mathbf{0}$ strictly satisfies the two constraints for positive $\epsilon_1$ and $\epsilon_2$, which means that \eqref{eq:BHT_approx_3simp} satisfies Slater's condition  \cite[Sec.~5.2.3]{boydconvex}. Therefore, the convex program has zero duality gap, and the optimal solutions are given by the following Karush\--Kuhn\--Tucker (KKT) conditions \cite[Sec.~5.5.3]{boydconvex}:
	\begin{subequations}
		\begin{align}
		\label{eq:KKTcondition_derivative}
		\hspace{-.1in} \nabla\big\{f_0(\mathbf{a}^*)\!+\!\eta_1^*\big(\epsilon_1\!-\!
		f_1(\mathbf{a}^*)\big)&\!+\!\eta_2^*\big(\epsilon_2\!-\!  f_2(\mathbf{a}^*)\big)\big\}  \!=\!0\\
		\label{eq:KKTcondition_loose1}
		\eta_1^*\big(\epsilon_1-f_1(\mathbf{a}^*)\big) &=0\\
		\label{eq:KKTcondition_loose2}
		\eta_2^*\big(\epsilon_2-f_2(\mathbf{a}^*)\big)&=0\displaybreak[0]\\
		\label{eq:KKTcondition_con1}
		f_1(\mathbf{a}^*)& \leq \epsilon_1\\
		\label{eq:KKTcondition_con2}
		f_2(\mathbf{a}^*)& \leq \epsilon_2\\
		\label{eq:KKTcondition_dualcon1}
		\eta_1^*&\geq 0\\*
		\label{eq:KKTcondition_dualcon2}
		\eta_2^* &\geq 0.
		\end{align}
	\end{subequations}
	where $f_0,f_1$, and $f_2$ represent the objective and two constraint functions of \eqref{eq:BHT_approx_3simp}, respectively, $\mathbf{a}^*$ is the optimal solution of~\eqref{eq:BHT_approx_3simp}, and $\eta^*_1$ and $\eta^*_2$ are the optimal solutions of the dual problem of \eqref{eq:BHT_approx_3simp}. From \eqref{eq:KKTcondition_derivative}, we have
	\begin{align}
	\label{eq:optimal_alpha_inproof}
	\mathbf{a}^*=\frac{\lambda_{\mathrm{p}}}{2}\mathbf{v}_{\mathrm{p}}\Big(\eta_1^*\big[\mathbf{p}_1\big]+\eta_2^*\big[\mathbf{p}_2\big]\Big)^{-1}.
	\end{align}
	When $\eta^*_1>0$ and $\eta^*_2 = 0$, i.e., the first constraint is active, the optimal solution $\mathbf{a}^*$ of \eqref{eq:BHT_approx_3simp} is
	\begin{align}
	\label{eq:oneconstraint_proof_alpha1}
	\mathbf{a}^*=\frac{\lambda_{\mathrm{p}}}{2}\mathbf{v}_{\mathrm{p}}\big[(\eta_1^*\mathbf{p}_1)^{-1}\big],
	\end{align}
	such that from \eqref{eq:KKTcondition_loose1}, 
	\begin{align}
	\label{eq:oneconstraint_proof_eta1}
	\eta_1^*=\sqrt{\frac{(\lambda_{\mathrm{p}})^2(\mathbf{v}_{\mathrm{p}})^T\big[(\mathbf{p}_1)^{-1}\big]\mathbf{v}_{\mathrm{p}}}{8\epsilon_1}}.
	\end{align}
	Substituting $\eta_1^*$ from \eqref{eq:oneconstraint_proof_eta1} in \eqref{eq:oneconstraint_proof_alpha1}, we obtain 
	\begin{align}
	\label{eq:optalpha_const1_p}
	\mathbf{a}^*=\sqrt{\frac{2\epsilon_1}{\mathbf{v}_{\mathrm{p}}\big[(\mathbf{p}_1)^{-1}\big](\mathbf{v}_{\mathrm{p}})^T}}\mathbf{v}_{\mathrm{p}}\big[(\mathbf{p}_1)^{-1}\big]
	\end{align}
	In addition, for $\eta^*_2 = 0$, \eqref{eq:KKTcondition_con2} is a strict inequality if and only if   $\epsilon_1$ and $\epsilon_2$ satisfy \eqref{eq:const1_active}.
	
	When $\eta^*_1=0$ and $\eta^*_2 > 0$, with the same deduction based on \eqref{eq:KKTcondition_loose2} and \eqref{eq:optimal_alpha_inproof}, the optimal solution $\mathbf{a}^*$ is
	\begin{align}
	\label{eq:optalpha_const2_p}
	\mathbf{a}^*=\sqrt{\frac{2\epsilon_2}{\mathbf{v}_{\mathrm{p}}\big[(\mathbf{p}_2)^{-1}\big](\mathbf{v}_{\mathrm{p}})^T}}\mathbf{v}_{\mathrm{p}}\big[(\mathbf{p}_2)^{-1}\big].
	\end{align} and \eqref{eq:KKTcondition_con1} is a strict  inequality if and only if   $\epsilon_1$ and $\epsilon_2$ satisfy~\eqref{eq:const2_active}. 
	
	When $\eta^*_1>0$ and $\eta^*_2 > 0$, the optimal solution $\mathbf{a}^*$ is given by \eqref{eq:optimal_alpha_inproof}. From \eqref{eq:KKTcondition_loose1} and \eqref{eq:KKTcondition_loose2}, we have $f_1(\mathbf{a}^*)=\epsilon_1$ and $f_2(\mathbf{a}^*)=\epsilon_2$, such that $\eta_1^*$ and $\eta_2^*$ satisfy \eqref{eq:eta1} and \eqref{eq:eta2}.
	
	Finally, since \eqref{eq:optimal_alpha_inproof}, \eqref{eq:optalpha_const1_p} and \eqref{eq:optalpha_const2_p} yield  $\mathbf{a}^*$ for \eqref{eq:BHT_approx_3simp}, the optimal solutions for \eqref{eq:BHT_approx_2} are obtained by considering both solutions $\pm\mathbf{a}^*$ as proved in   Lemma \ref{lemma_equivalentOS}.
\end{proof}

The  proof of Theorem \ref{Theorem:BHT_optimalsol_HPapprox}  follows directly from Theorem \ref{Theorem:BHT_approximation_simplification}, Lemma \ref{lemma_equivalentOS} and Lemma \ref{lamma_optimalalpha} as follows.  

\begin{proof}
	The optimal privacy mechanism $\mathbf{W}'$ is a perturbation of $\mathbf{W}_0$ as $\mathbf{W}'=\mathbf{W}_0+\mathbf{A}^*\big[\sqrt{\mathbf{w}_0}\big]$, where $\mathbf{A}^*$ optimizes \eqref{eq:BHT_approximation}. From Theorem \ref{Theorem:BHT_approximation_simplification}, $\mathbf{A}^*$ is given by $\mathbf{A}^*=(\mathbf{a}^*)^T\mathbf{v}$, where $\mathbf{v}$ is a unit norm $M$-dimensional vector that is orthogonal to $\sqrt{\mathbf{w}_0}$, and $\mathbf{a}^*$ is the optimal solution of \eqref{eq:BHT_approx_2} as presented in Lemma~\ref{lamma_optimalalpha}. Therefore, the optimal privacy mechanism $\mathbf{W}'$ is~\eqref{eq:W_generation}.	
\end{proof}

\section{Proof of Lemma \ref{lemma:Approximation_Renyidiv}}\label{proof:lemma_Approximation_Renyidiv}
\begin{proof}
	From \cite[Theorem~8]{fDivergenceInequalities_Sason}, for $\alpha\in (0,1)$ the order-$\alpha$  Hellinger divergence\footnote{The the order-$\alpha$  Hellinger divergence  $\mathcal{H}_{\alpha}(\bp\|\mathbf{q})  = D_{ f_\alpha }(\bp\|\mathbf{q})$  where $D_f(\bp\|\mathbf{q}) = \sum_i q_i f \big( \frac{p_i}{q_i} \big)$ is the $f$-divergence and $f_\alpha(t) = \frac{t^\alpha-1}{\alpha-1}$. } $\mathcal{H}_{\alpha}$ and  the relative entropy  satisfy
	\begin{align}\label{eq:Inequality_RelativeVSHellinger}
	\kappa_{\alpha}(\beta_2)\leq \frac{D(\bp_2\|\bp_1)}{\mathcal{H}_{\alpha}(\bp_2\|\bp_1)}\leq \kappa_{\alpha}(\beta_1^{-1}).
	\end{align}
	Here, $\kappa_{\alpha}$ is a continuous function defined as
	\begin{align}
	\kappa_{\alpha}(t)=\begin{cases}
	\frac{(1-\alpha)t\log t}{1-t^{\alpha}+\alpha t-\alpha}  &\quad  t\in(0,1)\cup (1,\infty)\\
	\alpha^{-1}\log e   &\quad t=1\\
	\log e &\quad  t=0
	\end{cases},
	\end{align} and   $\beta_1$ and $\beta_2$, which depend on $\bp_1$ and $\bp_2$, are  defined as
	\begin{align}
	\beta_1
	&=\exp\Big(-\log\max_i\frac{p_{2i}}{p_{1i}}\Big)=\bigg(\max_i\frac{p_{2i}}{p_{1i}}\bigg)^{\log \frac{1}{e}}\\
	\beta_2
	&=\exp\Big(-\log\max_i\frac{p_{1i}}{p_{2i}}\Big)=\bigg(\max_i\frac{p_{1i}}{p_{2i}}\bigg)^{\log\frac{1}{e}}.
	\end{align}
	Assuming $\bp_1\ne\bp_2$, we know that $\max_i\frac{p_{1i}}{p_{2i}},\max_i\frac{p_{2i}}{p_{1i}}>1$. Since $\log\frac{1}{e}<0$,  $\beta_1,\beta_2<1$. However, it is also  clear that 
\begin{equation}
\bp_2\to\bp_1\;\Longrightarrow\;\beta_j \to 1, \quad \forall\, j = 1,2.  \label{eqn:cont_beta}
	\end{equation}	
	Since $\kappa_\alpha$ is continuous and $\kappa_\alpha(1) = \alpha^{-1}\log e$, from~\eqref{eqn:cont_beta},
	\begin{align}
\bp_2\to\bp_1 \;  \Longrightarrow\;		 \kappa_\alpha(\beta_2) \to\frac{\log e}{\alpha} , \kappa_\alpha(\beta_1^{-1}) \to\frac{\log e}{\alpha}.
	\end{align}
	Consequently, from \eqref{eq:Inequality_RelativeVSHellinger},
	\begin{equation}
\bp_2\to\bp_1 \;  \Longrightarrow\;	 \frac{D(\bp_2\|\bp_1)}{\mathcal{H}_\alpha(\bp_2\|\bp_1)} \to \frac{\log e}{\alpha} . \label{eq:HelliDiver_KLDiver}
	\end{equation}

	In addition, from~\cite{fDivergenceInequalities_Sason}, we know that the order-$\alpha$ R\'{e}nyi divergence and  the order-$\alpha$ Hellinger divergence admit the following relationship  for $\alpha\in (0,1)\cup (1,\infty)$:
	\begin{equation}\label{eq:RenyiDiver_KLDiver}
	D_{\alpha}(\bp_2\|\bp_1)=\frac{1}{\alpha-1}\log\big(1+(\alpha-1)\mathcal{H}_{\alpha}(\bp_2\|\bp_1)\big) 
	\end{equation}
	The proof  of  \eqref{eq:approx_RenyiDiverKLDiver} is completed by uniting 
\eqref{eq:HelliDiver_KLDiver} and \eqref{eq:RenyiDiver_KLDiver}. 
\end{proof}

\section{Proof of Lemma \ref{Lemma:MHT_approximation_simplerSDP}}\label{proof:lemma_MHT_approximation_simplerSDP}
\begin{proof}
We first replace the matrix variable $\bA$ of \eqref{eq:MHT_approximation} by a positive semi-definite symmetric matrix $\bB=\bA\bA^T$.  The objective function can thus be rewritten as
\begin{align}
\frac{1}{2}\|(\bp_k-\bp_1)\bA\|^2 &= \frac{1}{2}(\bp_k-\bp_1)\bB(\bp_k-\bp_1)^T\\
&= \frac{1}{2}\Tr\big((\bp_k-\bp_1)^T(\bp_k-\bp_1)\bB\big)
\end{align}
for $k\in\{2,\ldots,m\}$. For $k\in\{1,\ldots,m\}$ the constraint functions are 
\begin{align}
\frac{1}{2}\sum_{i=1}^{M}p_{ki}\|\bA_i\|^2=\frac{1}{2}\sum_{i=1}^{M}p_{ki}\bB_{ii}=\frac{1}{2}\Tr\big([\bp_k]\bB\big) 
\end{align}
Using an auxiliary scalar variable $t$ to represent a common lower bound of the terms $\frac{1}{2}\|(\bp_k-\bp_1)\bA\|^2 $ in the objective function, the problem \eqref{eq:MHT_approximation} without the constraint \eqref{eq:MHT_approximation_A} can be seen to be  equivalent  to~\eqref{eq:MHT_approximation_simplerSDP}.

Define the inner product between matrices $\mathbf{X}$ and $\mathbf{Y}$  as
$
\mathbf{X}\bullet\mathbf{Y}\triangleq \sum_{i=1}^{M}\sum_{j=1}^{M} X_{ij}Y{ij}=\Tr(\mathbf{X}^T\mathbf{Y}).
$ 
Then, the optimization problem in \eqref{eq:MHT_approximation_simplerSDP} can be rewritten as
\begin{align}\label{eq:MHT_approximation_simplerSDPinproof}
-\min_{\substack{\bB, t}} \quad &  -t \nonumber\\
\mathrm{s.t.} \quad 
&\begin{bmatrix}-t& \mathbf{0}\\\mathbf{0}^T&\frac{1}{2}\mathbf{P}_k\end{bmatrix}
\bullet
\begin{bmatrix} 1& \mathbf{0}\\ \mathbf{0}^T&\bB \end{bmatrix}\geq 0  \;\;\; \;\,\forall\,  k\in\{2,\ldots,m\}\nonumber\\
& \begin{bmatrix}\epsilon_k& \mathbf{0}\\\mathbf{0}^T&-[\bp_k]\\\end{bmatrix}
\bullet
\begin{bmatrix}1& \mathbf{0}\\\mathbf{0}^T&\bB\\\end{bmatrix}\geq 0 \;\; \;\forall\, k\in\{1,\ldots,m\}\nonumber\\*
&\begin{bmatrix}1& \mathbf{0}\\\mathbf{0}&\bB\\\end{bmatrix}\succeq 0
\end{align}
where the $M\times M$ matrix $\mathbf{P}_k\triangleq(\bp_k-\bp_1)^T(\bp_k-\bp_1)$.

Since $\mathbf{P}_k$  (for $k\in\{2,\ldots,m\}$)  and $[\bp_k]$  (for $k\in\{1,\ldots,m\}$) are symmetric positive semi-definite matrices, by referring to \cite[Section 8.2]{SDP_Robert}, we know that \eqref{eq:MHT_approximation_simplerSDPinproof} is an SDP.	
\end{proof}
\vspace{-.2in}

\ifCLASSOPTIONcaptionsoff
  \newpage
\fi



%
\bibliographystyle{IEEEtran}
\bibliography{HypothesisTestingHighPrivacy2}

%

%
%
%




\end{document}